\crefname{equation}{Eq.}{Eqs.}
\renewcommand\widehat[1]{%
	\savestack{\tmpbox}{\stretchto{%
			\scaleto{%
				\scalerel*[\widthof{\ensuremath{#1}}]{\kern-.6pt\bigwedge\kern-.6pt}%
				{\rule[-\textheight/2]{1ex}{\textheight}}%
			}{\textheight}%
		}{0.5ex}}%
	\stackon[1pt]{#1}{\tmpbox}%
}
\newtheorem{theorem}{Theorem}[section]
\newtheorem{definition}[theorem]{Definition}
\newtheorem{proposition}[theorem]{Proposition}
\newtheorem{corollary}[theorem]{Corollary}
\newtheorem{lemma}[theorem]{Lemma}
\newtheorem{remark}[theorem]{Remark}
\newcommand{\id}{{\rm Id}}
\newcommand{\one}{{\mathbf 1}}
\newcommand{\tr}{{\operatorname{tr}}}
\renewcommand{\d}{{\rm d}}
\newcommand{\pp}{{\mathbb P}}
\newcommand{\ee}{{\mathbb E}}
\newcommand{\nn}{{\mathbb N}}
\newcommand{\cc}{{\mathbb C}}
\newcommand{\supp}{\operatorname{supp}}
\newcommand{\as}{\operatorname{-}\mathrm{a.s.}}
\title[Selection of sectors for quantum trajectories]{Exponentially fast selection of sectors for quantum trajectories beyond non-demolition measurements}
\author[Benoist]{Tristan Benoist}
\address{Institut de Math\'ematiques de Toulouse, UMR5219, Universit\'e de Toulouse, CNRS, UPS IMT, F-31062 Toulouse Cedex 9, France}
\email{tristan.benoist@math.univ-toulouse.fr}
\author[Greggio]{Linda Greggio}
\address{Laboratoire de Physique de l’Ecole Normale Supérieure, Mines Paris, Inria, CNRS, ENS-PSL, Sorbonne Université, PSL Research University, Paris, France}
\email{linda.greggio@inria.fr}
\author[Pellegrini]{Clément Pellegrini}
\address{Institut de Math\'ematiques de Toulouse, UMR5219, Universit\'e de Toulouse, CNRS, UPS IMT, F-31062 Toulouse Cedex 9, France}
\email{clement.pellegrini@math.univ-toulouse.fr}
\date{\today}
\begin{document}
\begin{abstract}
    We show that, in long time, quantum trajectories select an invariant subspace of the Hilbert space of the system being indirectly measured. This selection is shown to be exponentially fast in an almost sure sense and in average. Moreover, the selection mimics a unique positive operator measurement. This generalizes known results for non-demolition measurements to arbitrary repeated indirect measurements. Our proofs are based on the introduction of a deformation of the original instrument to an equivalent one restricted to some subspace.
\end{abstract}

\maketitle
\tableofcontents
\section{Introduction}

Quantum trajectories describe the evolution of a quantum system undergoing repeated indirect measurements \cite{braginsky1995quantum, haroche2006exploring}. %
They are fundamental modeling tools in the theory of measurement and control of quantum systems, especially for quantum optics experiments -- see \cite{breuer2002theory,gardiner2004quantum,haroche2006exploring,wiseman2009quantum}. From a mathematical perspective, quantum trajectories are specific Markov chains that take values in the set of the system states. 

In this article we are interested in how quantum trajectories select invariant subspaces in the long time limit. If every invariant subspace is one dimensional and the transient subspace is trivial, the indirect measurement process is called a quantum non-demolition (QND)  measurement. This terminology was introduced by physicists in \cite{braginsky1995quantum}. In continuous time, the resulting process ressembles the stochastic collapse models introduced by Gisin~\cite{gisin} and Diosi~\cite{diosi}. In \cite{adler}, Adler \emph{et al.} used a martingale approach to prove the collapse in these models. From an experimental point of view, an important milestone was the QND measurement of a number of photon in a superconducting cavity by S. Haroche's group \cite{guerlin_progressive_2007}. That inspired  Bauer and Bernard to prove, not only the reproduction of the wave function collapse postulate, but also an exponential rate for the collapse~\cite{BenrardBauer} -- see also \cite{BenoistAHP}. In \cite{amini2023exponential} the authors generalized the equivalence to wave function collapse when the invariant subspaces are of arbitrary dimension. They also prove a different kind of convergence rate for the collapse. However, they assumed that the direct sum of all the invariant subspaces was equal to the whole space, namely that the system did not present any transient subspace. The escape from a transient subspace has been studied, for the average evolution in \emph{e.g.} \cite{Ticozzi2009}. The almost sure convergence with exponential rate was proved in \cite{BenoistTicco}. In the present article we complete this picture, proving exponentially fast selection of subspaces, mimicking wave function collapse, without any prior assumptions on the shape of the invariant subspaces. For that purpose, denoting $\mathcal{H}$ the Hilbert space of the system, we show there exist a unique positive operator valued measure (POVM) and a related orthogonal partition $\{\mathcal{K}_\alpha\}_\alpha$ of the recurrent subspace of $\mathcal{H}$ such that the system state concentrates exponentially fast onto one of these subspace with a law given by the POVM. We call the subspaces forming the orthogonal partition sectors, in accordance with \cite{BenoistAHP}.

We emphasize that our result provides a full generalization of previous results regarding the selection of invariant spaces for quantum trajectories. The sectors and the related POVM are not fixed a priory but are intrinsic to the instrument used to measure. The results of \cite{BenoistAHP, BenrardBauer, Amini20211} become corollaries to \Cref{thm:generalized QND,thm:meanconv}. They are related to the example of \Cref{sec:no transient}, where no transient subspace is present. It is worth noting that \cite{Amini20211} did not consider the almost sure convergence rate, and its identifiability condition inherently forbids the presence of non-trivial sectors. It is our introduction of sectors and the use of a deformation of quantum channels restrictions that allows for a generalization to all possible scenarios for quantum channels and instruments.

The sectors definition follows from the decomposition of shift invariant measures over measurement outcomes into ergodic components. Then, the construction of the related subspaces follows from the decomposition of the Hilbert space into invariant subspaces. The existence and classification of invariant subspaces were studied under the name "enclosures" in \cite{Baumgartner2012structures,carbone2016irreducible}. The sectors we define are direct sums of minimal enclosures. It provides a classification of enclosures depending on the statistics of the measurement outcomes they induce. The POVM that determines the law of the selected sector is constructed using the absorption operators introduced in \cite{carbone2021absorption}. 

Based on the minimal enclosure decomposition, in \cite{carbone2022generalized}, the authors have extended several limit theorems for the statistics of measurement outcomes to situations when multiple minimal enclosures exist. In \cite{BePelLin} we review and extend these results to other limit theorems and some concentration inequalities using our definition of sectors. Here, we focus on the system evolution conditioned on the measurement outcomes. We demonstrate that the sector selection occurs exponentially fast, both almost surely and on average. 

We prove the convergence of each quantum trajectory to a random sector using a standard martingale convergence argument, similar to those developed in \cite{adler, BenrardBauer, BenoistAHP, BenoistTicco}.

The almost sure exponential convergence rate is derived in terms of the relative entropy between sectors. The proof relies on tools from ergodic theory, such as Kingman's ergodic theorem, inspired by \cite{Benoistentrop1}. However, here, the possible absence of a full rank invariant state implies the outcome law is not always dominated by the invariant state one. Hence, we develop a finer analysis of some limits under the left-shift. 

For the average exponential rate, we use a Lyapunov approach similar to the one developed in \cite{amini2023exponential}. Notably, we extend the results of this reference, taking into account that a full-rank invariant state may not exist and dealing with more general measurement procedures, including imperfect measurements.

On top of the convergence for quantum trajectories, we show the same convergence and bounds hold for an appropriate filter of them. Filtering is an important aspect of quantum estimation theory. The filter problem addresses the estimation of quantum trajectories when the initial state is unknown. An estimated trajectory (starting with an arbitrary initial state) is updated using only the measurement outcomes obtained from the true trajectory. We show that such a filter has the same behavior as the true trajectory, that is exponential selection of the same sector with the same rates. This is reminiscent of results on the stability of filters -- see \cite{BenoistAHP,Rouchon2011,Amini2014,Amini20211}.

Our proofs are based on the construction of instruments restricted to sectors. These instruments are deformations of the reference one and allow to write the outcome laws as mixtures of outcome laws verifying ergodic properties. Then, these ergodic properties are transferred back to the original outcome law in the spirit of \cite{BenoistAHP,BenoistJPHYSA,benoisGamb}.
\bigskip

The article is structured as follows. In Section~\ref{sec:setup}, we define instruments, invariant states and effects, the associated outcomes laws and quantum trajectories. In particular we define the notion of sectors through a decomposition of shift invariant outcome laws into ergodic component. Then, in \Cref{thm:sectors POVM}, we relate this definition to invariant subspaces (enclosures) and a POVM. In \Cref{sec:main results} we state our main results about convergence and related exponential rates. In \Cref{sec:proof sectors} we prove the theorems of \Cref{sec:setup} to be able to discuss several examples in \Cref{sec:xpl}. Then, in \Cref{sec:deformed instruments} we introduce the deformed instruments that allows us to reduce to ergodic outcome laws. In Sections~\ref{sec:as} and \ref{sec:mean} we present the proofs of respectively the almost sure and mean rate of sector selection.

\section{Repeated measurements and sectors}\label{sec:setup}
\subsection{Outcomes law}
The law of quantum repeated measurement outcomes is described by an instrument $\mathcal J$. We assume that the set of possible measurement outcomes $\mathcal A$ is finite. We restrict ourselves to finite dimensional Hilbert spaces. The set of linear maps from a Hilbert space $\mathcal H$ (modeling the system) to itself is denoted $\mathcal B(\mathcal H)$. 
Then the instrument is an indexed set of completely positive (CP) maps from $\mathcal B(\mathcal H)$ to itself summing up to an identity preserving CP map or Quantum Channel:
$$\mathcal J:=\{\Phi_a\}_{a\in\mathcal A},\quad \Phi:=\sum_{a\in \mathcal A}\Phi_a\mbox{ s.t. }\Phi(\id_{\mathcal H})=\id_{\mathcal H}.$$
The set of states is the set $\mathcal D(\mathcal H)$ of density operators on $\mathcal H$:
$$\mathcal D(\mathcal H)=\{\rho\in\mathcal B(\mathcal H) : \rho\geq 0, \tr\rho=1\}.$$
In the physics literature, effects are operators allowing to compute the statistics of a yes/no outcome measurement. The operator $E$ is an effect if $0\leq E\leq \id_{\mathcal H}$. Then, $\{E,\id_{\mathcal H}-E\}$ is a two outcome POVM. We denote $\mathcal E(\mathcal H)$ the set of effects:
$$\mathcal E(\mathcal H)=\{E\in \mathcal B(\mathcal H): 0\leq E\leq \id_{\mathcal H}\}.$$
We omit the dependence in $\mathcal H$ when it does not introduce any confusion.

The dual of a map $T:\mathcal B(\mathcal H)\to\mathcal B(\mathcal H)$ with respect to the Hilbert-Schmidt inner product (\emph{i.e.} the Schrödinger picture evolution) is denoted $T^*$. Thus, $\Phi^*$ preserves the trace: $\tr\circ\Phi^*=\tr$.

A sequence of measurement outcomes is an element of $\Omega:=\mathcal A^\nn$. We denote $\Omega_{\rm fin.}:=\cup_{n\in \nn} \mathcal A^n$ the set of finite sequences. For two words $\mathbf{a}=(a_1,\dotsc,a_n)$ and $\mathbf{b}=(b_1,\dotsc,b_p)$, we use the notation $\mathbf{ab}=(a_1,\dotsc,a_n,b_1,\dotsc,b_p)$, for their concatenation.  

For any $n\in\nn$, let $\mathcal F_n$ be the smallest $\sigma$-algebra making measurable the cylinder sets
$$C_{\mathbf{a}}=\{\omega\in \Omega: \omega_1=a_1,\dotsc,\omega_{n}=a_n\}$$
where $\mathbf{a}\in \mathcal A^n$ for an arbitrary $n\in \nn$. The sequence $(\mathcal F_n)_{n\in\nn}$ is a filtration and  we set $\mathcal F$ as the smallest $\sigma$-algebra such that $\mathcal F_n\subset \mathcal F$ for any $n\in \nn$. The $\sigma-$algebra $\mathcal F$ is the so-called \textit{cylinder} $\sigma-$algebra and $(\Omega,\mathcal F,(\mathcal F_n)_{n\in \nn})$ is a filtered measurable space.

Following the postulates of quantum mechanics, given an instrument $\mathcal J$ and an initial system state $\rho\in \mathcal D(\mathcal H)$, the law of the sequence of outcomes is defined through Kolmogorov extension theorem by
$$\pp_\rho(C_{\mathbf{a}})=\pp_\rho(\mathbf{a})=\tr(\rho\Phi_{a_1}\circ\dotsb\circ\Phi_{a_n}(\id))$$
for any $\mathbf{a}=(a_1,\dotsc,a_n)\in \Omega_{\rm fin.}$. This probability is defined on $(\Omega,\mathcal F)$. To lighten the notation, we sometimes use the notation $\Phi_{\textbf{a}}=\Phi_{a_1}\circ\dotsb\circ\Phi_{a_n}$. Furthermore, having observed the first $n$ outcomes to be $\mathbf{a}=(a_1,\dotsc,a_n)$, the state after these $n$ measurements and conditioned on the observation of $\mathbf{a}$ is
\begin{equation}\label{eq:defrhon}   \rho_n=\frac{\Phi^*_{a_n}\circ\ldots\circ\Phi^*_{a_1}(\rho)}{\tr( \Phi^*_{a_n}\circ\ldots\circ\Phi^*_{a_1}(\rho))}=\frac{\Phi^*_{a_n}(\rho_{n-1})}{\tr(\Phi^*_{a_n}(\rho_{n-1}))},\quad\rho_0=\rho.
\end{equation}
Note that the observation of $\mathbf{a}=(a_1,\dotsc,a_n)$ appears with probability 
$\tr( \Phi^*_{a_n}\circ\ldots\circ\Phi^*_{a_1}(\rho))$. Hence, if $\tr( \Phi^*_{a_n}\circ\ldots\circ\Phi^*_{a_1}(\rho))=0$, meaning that the result $\mathbf a$ cannot be observed, the above expression is ill-defined. In that case,  we arbitrarily impose a value $\mu$ to $\rho_n$, where $\mu\in\mathcal D(\mathcal H)$. This construction is fictitious since this arbitrary assignment happens with $\pp_\rho$-probability $0$. Indeed, it is a tautology to say that for all $n\in\mathbb{N}^*$,
\[
\mathbb{P}_\rho(\{\omega : \tr( \Phi^*_{\omega_n}\circ\ldots\circ\Phi^*_{\omega_1}(\rho))=0\})=0.
\]

 Then, given that the law of the outcome sequence is $\pp_\rho$, the process $(\rho_n)$ defined by \eqref{eq:defrhon} is well defined and is a Markov chain called \textit{quantum trajectory.} We are interested in the behavior of this Markov chain as $n$ grows to infinity.

 \subsection{Sectors}
 In this section we define the sectors and related objects.

 Given a quantum channel $\Phi$ we denote $\mathcal D_{\Phi^*}$ the set of $\Phi^*$-invariant states,
 $$\mathcal D_{\Phi^*}:=\{\rho\in \mathcal D: \Phi^*(\rho)=\rho\}.$$
 This set is related to the set of measures $\pp_\rho$ invariant under the left-shift $\theta$ on $\Omega$:
 $$\theta(\omega_1,\omega_2,\dotsc)=(\omega_2,\omega_3,\dotsc).$$
 We denote $\mathcal M(\mathcal J)$ the image of $\mathcal D(\mathcal H)$ by $\rho\mapsto \pp_\rho$. It is a subset of the probability measures on $\Omega$:
 $$\mathcal M(\mathcal J)=\{\pp_\rho:\rho\in \mathcal D(\mathcal H)\}.$$
We denote by $\mathcal M_\theta(\mathcal J)$ the set of elements of $\mathcal M(\mathcal J)$ that are $\theta$-invariant:
$$\mathcal M_\theta(\mathcal J)=\{\pp_\rho : \rho\in \mathcal D(\mathcal H), \pp_\rho\circ\theta^{-1}=\pp_\rho\}.$$
The map $\rho\mapsto \pp_\rho$ is an affine. It has has a specific convex structure.
\begin{proposition}\label{prop:simplex}
    The set $\mathcal M_{\theta}(\mathcal J)$ is a convex simplex.
\end{proposition}
This is proved in \Cref{sec:proof sectors}. Given this structure we can define the set of sectors.
\begin{definition}[Sectors]
    The set $\mathcal S$, of sectors, is the set of extreme points of $\mathcal M_\theta(\mathcal J)$.
\end{definition}
\Cref{prop:simplex} ensures $\mathcal S$ is a finite set. To lighten the notation, we identify $\mathcal S$ with a subset of $\nn$. Then, for any element $\alpha\in \mathcal S$, we denote $\pp_\alpha$ the corresponding element in $\mathcal M(\mathcal J)$ and reciprocally. 

These measures verify some ergodic property.
\begin{theorem}
    \label{thm:sectors ergodic}
    For each $\alpha\in \mathcal S$, $\pp_\alpha$ is $\theta$-ergodic, therefore all the measures $\pp_\alpha$, $\alpha\in \mathcal S$ are two by two mutually singular.
\end{theorem} 
This theorem is proved in \Cref{sec:proof sectors}. Using it, we can construct a useful related partition of $\Omega$.
\begin{theorem}
    \label{thm:sectors and partition}
    There exist mutually disjoint $\theta$-invariant sets $\{\Omega_\alpha\}_{\alpha\in \mathcal S}$ such that
    \begin{enumerate}
        \item $\pp_\alpha(\Omega_\beta)=\delta_{\alpha,\beta}$, for any $\alpha,\beta\in \mathcal S$,
        \item for any $\rho\in \mathcal D(\mathcal H)$, $\pp_\rho(\cup_{\alpha\in \mathcal S}\Omega_\alpha)=1$.
    \end{enumerate}
\end{theorem}
This theorem is also proved in \Cref{sec:proof sectors}. It allows for the introduction of a random variable tracking the partition,
$$\Gamma=\sum_{\alpha\in \mathcal S}\alpha \one_{\Omega_\alpha}.$$

\subsection{Related POVM and subspaces}
We constructed the sectors from the probability measures induced by the instrument $\mathcal J$. In that sense they are functions of $\mathcal J$. We now relate the sectors to subspaces and a POVM on $\mathcal H$.

Preliminarily, let $\mathcal E_\Phi$ be the set of $\Phi$-invariant effects. For any $\rho\in \mathcal D(\mathcal H)$ and $E\in \mathcal E_\Phi$ such that $\tr(\rho E)>0$, and $\mathbf{a}\in \Omega_{\rm fin.}$, let
$$\pp_{\rho,E}(C_{\mathbf{a}})=\pp_{\rho,E}(\mathbf{a})=\tfrac{1}{\tr(\rho E)}\tr(\rho\Phi_{\mathbf{a}}(E)).$$
Since $E$ is $\Phi$-invariant, Kolmogorov's extension theorem ensures $\pp_{\rho,E}$ is a well defined probability measure on $\Omega$. The interpretation of $\pp_{\rho,E}$ is that, after some fixed number of measurements using the instrument $\mathcal J$, the POVM $\{E,\id_{\mathcal H}-E\}$ is performed and $\pp_{\rho,E}$ is the law of the outcomes of the $\mathcal J$ measurements conditioned on the POVM outcome leading to $E$. This can be generalized to any POVM whose positive operators are all $\Phi$-invariant. The measure $\pp_{\rho,E}$ is fictitious, since it is defined irrespectively of the number of $\mathcal J$ measurements. It will however reveal useful in understanding what are the sectors we just defined at measurement outcome level.
Using this construction we will identify a POVM and subspaces related to sectors.

Let
$$\mathcal T=\{ x\in \mathcal H : \langle x,{\Phi^*}^n(\rho)x\rangle\xrightarrow[n\to\infty]{}0,\forall \rho\in \mathcal D(\mathcal H)\}.$$
Cauchy-Schwartz inequality implies it is a subspace of $\mathcal H$. It is called the transient subspace. 

For any subspace $\mathcal K$, let $\mathcal K^\perp$ denote its orthogonal complement in $\mathcal H$.
\begin{theorem}
    \label{thm:sectors POVM}
    There exists a unique POVM $\{E_\alpha\}_{\alpha\in \mathcal S}$ such that
    \begin{enumerate}
        \item for each $\alpha\in \mathcal S$, $E_\alpha$ is $\Phi$-invariant,
        \item for any $\alpha\in \mathcal S$ and any $\rho\in \mathcal D_{\Phi^*}$ such that $\tr(\rho E_\alpha)>0$, $\pp_{\rho,E_\alpha}=\pp_\alpha$.
    \end{enumerate}

    Moreover, setting $\mathcal K_\alpha=E_\alpha\mathcal T^\perp$, for any $\alpha\in \mathcal S$, $\Phi^*(\mathcal D(\mathcal K_\alpha))\subset \mathcal D(\mathcal K_\alpha)$, and $\{\mathcal K_\alpha\}_{\alpha\in \mathcal S}$ is an orthogonal partition of $\mathcal T^\perp$.
\end{theorem}
This theorem is proved in \Cref{sec:proof sectors}. 

\begin{remark}
    The proof shows the operators $E_\alpha$ are the absorption operators related to the invariant subspaces (or enclosures) $\mathcal K_\alpha$ as defined in \cite{carbone2021absorption}.
\end{remark}

We have now all the elements to formulate our main results. In the rest of the article we use the shorthand $\pp_{\rho,\alpha}$ for $\pp_{\rho,E_\alpha}$.

\section{Main results}\label{sec:main results}
Our main results concerns the behavior in large time for the system state to be in one of the sector subspaces. The POVM $\{E_\alpha\}_{\alpha\in \mathcal S}$ introduced in \Cref{thm:sectors POVM} can be interpreted as a measurement of which sector. Then, for each $\alpha\in \mathcal S$ we define
$$Q_n(\alpha)=\tr(\rho_n E_\alpha),$$
where $\rho_n$ is defined in \Cref{eq:defrhon}, as the probability to be in sector $\alpha$ after $n$ measurements using instrument $\mathcal J$. Using Baye's rule, this quantity can also be interpreted as the probability of obtaining outcome $\alpha$ in a measurement of POVM $\{E_\beta\}_{\beta\in \mathcal S}$, conditioned on the first $n$ outcomes of the measurements using $\mathcal J$. Indeed, direct algebraic computations lead to
$$Q_n(\alpha)=\frac{\pp_{\rho_0,\alpha}(\omega_1,\dotsc,\omega_n)\tr(\rho_0E_\alpha)}{\pp_{\rho_0}(\omega_1,\dotsc,\omega_n)}.$$
A related quantity is one where the initial state is unknown and therefore replaced by a trial state $\hat \rho$. The requirement on this trial state is that it is positive definite so that every measurement outcome of any POVM on it is strictly positive. That ensures for example that $\pp_\rho$ is absolutely continuous with respect to $\pp_{\hat \rho}$. Then, similarly to \Cref{eq:defrhon}, one can defined the updated trial state given the first $n$ outcomes of the measurements using $\mathcal J$,
$$\hat \rho_n=\frac{\Phi^*_{a_n}\circ\dotsb \circ\Phi^*_{a_1}(\hat \rho)}{\pp_{\hat \rho}(a_1,\dotsc,a_n)}.$$
Then, the process $(\hat \rho_n)_n$ is called a filter of $(\rho_n)_n$. Similarly to $Q_n(\alpha)$, for any $\alpha\in \mathcal S$ we define
$$\widehat{Q}_n(\alpha)=\tr(\hat \rho_n E_\alpha)=\frac{\pp_{\hat \rho,\alpha}(\omega_1,\dotsc,\omega_n)\tr(\hat \rho_0E_\alpha)}{\pp_{\hat \rho}(\omega_1,\dotsc,\omega_n)}.$$
Our results show that $(Q_n(\alpha))_n$ and $(\widehat{Q}_n(\alpha))_n$ have similar behavior when $n$ grows.

Our first theorem demonstrates an almost sure exponentially fast selection of a random sector equivalent to an initial measurement of the POVM $\{E_\alpha\}_{\alpha\in \mathcal S}$ given by \Cref{thm:sectors POVM}. Moreover, the induced sector measurement result does not depend on the knowledge of the initial state since the filter converges towards the same sector at the same rate. Next theorem and corollary are both proved in \Cref{sec:as}.
\begin{theorem}\label{thm:generalized QND}
    For any $\rho\in \mathcal D(\mathcal H)$, the limits
    $$Q_\infty(\alpha)=\lim_{n\to\infty} Q_n(\alpha)\quad \mbox{and}\quad \widehat{Q}_\infty(\alpha)=\lim_{n\to\infty}\widehat{Q}_n(\alpha)$$
    exist $\pp_\rho$ almost surely and
    $$Q_\infty(\alpha)=\widehat{Q}_\infty(\alpha)=\one_{\Omega_\alpha}, \quad \pp_\rho\as$$
    with $\pp_{\rho}(Q_\infty(\alpha)=1)=\pp_\rho(\Gamma=\alpha)=\pp_\rho(\Omega_\alpha)=Q_0(\alpha)=\tr(E_\alpha\rho)$. 

    Moreover, for all $\alpha,\gamma\in\mathcal S$, setting $s(\gamma|\alpha)$ as the specific relative entropy
    $$s(\gamma|\alpha)=\lim_{n\to\infty}-\tfrac1n\mathbb E_\gamma\left(\ln\frac{\mathbb P_{\rho^{\rm ch},\alpha}(\omega_1,\dotsc,\omega_n)}{\mathbb P_\gamma(\omega_1,\dotsc,\omega_n)}\right),$$
    it cancels if and only if $\alpha=\gamma$ and,
    $$\limsup_{n\to\infty} \frac 1n\ln Q_n(\alpha)\leq -s(\Gamma|\alpha),\quad \pp_\rho\as$$
    and
    $$\limsup_{n\to\infty} \frac 1n\ln \widehat{Q}_n(\alpha)\leq -s(\Gamma|\alpha),\quad \pp_\rho\as$$
\end{theorem}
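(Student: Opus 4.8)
The plan is to prove the four assertions—almost sure convergence, identification of the limit, the probability identities, and the exponential rate—by reducing each to a property of the conditional laws $\pp_{\rho,\alpha}$ and the stationary sector laws $\pp_\alpha$, and then to repeat the argument for the filter. First I would enlarge the space to carry the outcome $\Gamma'$ of a final measurement of the POVM $\{E_\alpha\}$, whose joint law with $(\omega_1,\dotsc,\omega_n)$ is consistent precisely because $\Phi(E_\alpha)=E_\alpha$; then $Q_n(\alpha)=\pp_\rho(\Gamma'=\alpha\mid\mathcal F_n)$, so $(Q_n(\alpha))_n$ is automatically a bounded $(\mathcal F_n)$-martingale (equivalently, one checks $\ee_\rho[\tr(E_\alpha\rho_{n+1})\mid\mathcal F_n]=\sum_a\tr(\Phi_a(E_\alpha)\rho_n)=\tr(E_\alpha\rho_n)$ using $\sum_a\Phi_a=\Phi$). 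By Lévy's upward theorem $Q_n(\alpha)$ converges a.s. to $\one_{\{\Gamma'=\alpha\}}$. To see that $\{\Gamma'=\alpha\}=\Omega_\alpha$ up to $\pp_\rho$-null sets, I would invoke the strong law of large numbers (\Cref{thm:LLN unique} and \Cref{lemma:sectO}): conditioned on $\Gamma'=\alpha$ the outcomes are distributed as $\pp_{\rho,\alpha}$, whose empirical frequencies converge to $\pp_\alpha(C_{\mathbf a})$, so $\pp_{\rho,\alpha}(\Omega_\alpha)=1$; as the $\Omega_\alpha$ are disjoint this yields $Q_\infty(\alpha)=\one_{\Omega_\alpha}$. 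The probability identities then follow from $L^1$-convergence of the martingale: $\ee_\rho[\one_{\Omega_\alpha}]=Q_0(\alpha)=\tr(E_\alpha\rho)$, and $\pp_\rho(\Gamma=\alpha)=\pp_\rho(\Omega_\alpha)$ by definition of $\Gamma$.

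For the rate, I would start from $Q_n(\alpha)=\tr(\rho E_\alpha)\,\pp_{\rho,\alpha}(\omega_{1:n})/\pp_\rho(\omega_{1:n})$ with $\omega_{1:n}:=(\omega_1,\dotsc,\omega_n)$, and bound the denominator from below by its single term $\tr(\rho E_\gamma)\pp_{\rho,\gamma}(\omega_{1:n})$; on $\Omega_\gamma$ (where $\tr(\rho E_\gamma)>0$) this gives $\tfrac1n\ln Q_n(\alpha)\le \tfrac1n\ln\big(\pp_{\rho,\alpha}(\omega_{1:n})/\pp_{\rho,\gamma}(\omega_{1:n})\big)+o(1)$. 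I would define $s(\gamma|\alpha)$ as the relative entropy rate of $\pp_\gamma$ with respect to $\pp_\alpha$; positivity of relative entropy gives $s(\gamma|\alpha)\ge 0$ with equality iff $\pp_\alpha=\pp_\gamma$, i.e. iff $\alpha=\gamma$ by the definition of sectors. The almost sure value of this rate comes from ergodic theory in the spirit of \cite{Benoistentrop1}: since the $\rho_i$, $i\in\gamma$, are extreme points of $\mathcal D_{\Phi^*}$, the measure $\pp_\gamma$ is shift-ergodic, and combining Shannon–McMillan–Breiman for the entropy rate of $\pp_\gamma$ with Kingman's subadditive ergodic theorem applied to the matrix cocycle $\omega\mapsto\Phi_{\omega_1}$ underlying the laws yields $\tfrac1n\ln\big(\pp_\alpha(\omega_{1:n})/\pp_\gamma(\omega_{1:n})\big)\to -s(\gamma|\alpha)$, $\pp_\gamma\as$

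The hard part is to replace the stationary laws $\pp_\alpha,\pp_\gamma$ by the initial-state-dependent laws $\pp_{\rho,\alpha},\pp_{\rho,\gamma}$ that actually occur in $Q_n(\alpha)$, that is, to show $\tfrac1n\ln\big(\pp_{\rho,\alpha}(\omega_{1:n})/\pp_\alpha(\omega_{1:n})\big)\to 0$ (and likewise for $\gamma$), $\pp_\gamma\as$ When a full-rank invariant state exists, two-sided domination makes this trivial; in general $\pp_{\rho,\alpha}$ is \emph{not} dominated by $\pp_\alpha$ and crude sandwiching fails—this is the support difficulty flagged in the introduction. This is exactly what the minimal deformed instruments of \Cref{sec:deformed instruments} are designed to resolve: they realize both $\pp_\alpha$ and $\pp_{\rho,\alpha}$ as outcome laws of a single instrument possessing a \emph{unique}, full-rank invariant state on the sector, started from two different initial conditions. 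Unique ergodicity of the deformed dynamics renders the per-symbol exponential rate independent of the initial condition, transferring the Kingman and Shannon–McMillan–Breiman limits back to $\pp_{\rho,\alpha}$ and $\pp_{\rho,\gamma}$, and hence to $Q_n(\alpha)$, giving $\limsup_n\tfrac1n\ln Q_n(\alpha)\le -s(\Gamma|\alpha)$, $\pp_\rho\as$ I expect this support-sensitive transfer to be the technical heart of the argument.

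Finally, for the filter I would note that positive definiteness of $\hat\rho$ gives $\pp_\rho\ll\pp_{\hat\rho}$, so that $\pp_\rho$-almost sure statements remain licit, and that the rate $s(\gamma|\alpha)$ depends only on the sector laws $\pp_\gamma,\pp_\alpha$ and not on the initial state. Running the identical reasoning on $\widehat Q_n(\alpha)=\tr(\hat\rho E_\alpha)\,\pp_{\hat\rho,\alpha}(\omega_{1:n})/\pp_{\hat\rho}(\omega_{1:n})$—again via the deformed instrument to make the rate insensitive to the initial condition—then yields $\widehat Q_n(\alpha)\to\one_{\Omega_\alpha}$ together with the same bound $\limsup_n\tfrac1n\ln\widehat Q_n(\alpha)\le -s(\Gamma|\alpha)$, $\pp_\rho\as$, which proves that the filter selects the same sector at the same rate as the true trajectory.
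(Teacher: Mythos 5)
Your first part (martingale convergence, identification of the limit with $\one_{\Omega_\alpha}$ via the law of large numbers for the deformed instruments, and the probability identities) is correct and essentially the paper's argument, just phrased through an enlarged space carrying $\Gamma'$ instead of through Radon--Nikodym derivatives. The skeleton of your rate argument (subadditivity, Kingman, a relative entropy rate $s(\gamma|\alpha)$) is also the right one. However, the step you yourself flag as the technical heart is resolved incorrectly. You claim the deformed instrument $\mathcal J^{(i)}$ has a ``unique, \emph{full-rank} invariant state on the sector'' and that ``unique ergodicity \dots renders the per-symbol exponential rate independent of the initial condition.'' The first claim is false: Proposition~\ref{prop:deformed instrument} gives uniqueness of the invariant state $\rho_i$ of ${\Phi^{(i)}}^*$ on $\mathcal D(\mathcal H_i)$, but $\supp\rho_i$ can be a strict subspace of $\mathcal H_i=\supp E_i$ (this is exactly the ``possible absence of a full rank invariant state'' the introduction warns about). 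Consequently the second claim is precisely the non-trivial assertion that fails to be automatic: without two-sided domination of $\pp_{\varrho,\alpha}$ by $\pp_\alpha$, unique ergodicity of the deformed channel does not by itself let you replace the initial condition in the Kingman limit. Your proposal thus hand-waves over the one step that genuinely requires a new idea.

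The paper's actual mechanism is different and you would need to supply something like it: (i) take the chaotic state $\rho^{\rm ch}$ as reference, so that $Q_n(\alpha)$ is controlled by $\left.\frac{\d\pp_{\rho^{\rm ch},\alpha}}{\d\pp_{\rho^{\rm ch}}}\right|_{\mathcal F_n}$ up to a factor with a positive a.s.\ limit; (ii) prove the Kingman limit $-s(\gamma|\alpha)$ only $\pp_\gamma$-a.s., hence $\pp_{\rho_\infty}$-a.s.\ for \emph{invariant} $\rho_\infty$ via \Cref{lem:conv combination of sectors}; (iii) establish the shift-monotonicity $r_\alpha\geq r_\alpha\circ\theta$ a.s.\ (this uses \Cref{lem:absolute continuity pp pp1pp2} on the ratio $\pp_{\rho^{\rm ch}}(C_{\omega_1,\dots,\omega_n})/\pp_{\rho^{\rm ch}}(C_{\omega_2,\dots,\omega_n})$); and (iv) pull the inequality back from the invariant state to $\rho^{\rm ch}$ via the Ces\`aro convergence $\frac1n\sum_k\pp_{\rho^{\rm ch}}\circ\theta^{-k}\to\pp_{T_\infty(\rho^{\rm ch})}$ of \Cref{lem:pp Cesaro convergence}. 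Separately, your justification that $s(\gamma|\alpha)=0$ iff $\alpha=\gamma$ is too quick: a vanishing relative entropy \emph{rate} does not imply equality of the measures. The paper gets the ``only if'' direction from superadditivity and Fekete's lemma ($s=0$ forces $\sup_n S(\pp_\gamma|_{\mathcal F_n}|\pp_{\rho^{\rm ch},\alpha}|_{\mathcal F_n})<\infty$), lower semicontinuity of relative entropy, and then absolute continuity contradicting $\pp_\gamma(\Omega_\alpha)=0$; note also that the reference measure must be $\pp_{\rho^{\rm ch},\alpha}$ rather than $\pp_\alpha$ for the finite-$n$ relative entropies to be controlled.
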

The following corollary straightens the interpretation that the system state converges towards one of the sectors with law given by the POVM $\{E_\alpha\}_{\alpha_\in \mathcal S}$ introduced in \Cref{thm:sectors POVM}. It shows that asymptotically, the system state is supported by one of the subspaces $\{\mathcal K_\alpha\}_{\alpha\in \mathcal S}$.
\begin{corollary}\label{cor:exp to subspace}
         $$\lim_{n\to\infty}\tr(P_{\Gamma}\rho_n)=\lim_{n\to\infty}\tr(P_{\Gamma}\hat\rho_n)=1, \quad \pp_\rho\as$$
    with $P_\Gamma$ the orthogonal projector onto $\mathcal K_\Gamma$ and $\pp_\rho(\Gamma=\alpha)=\tr(\rho E_\alpha)$ for all $\alpha\in \mathcal S$. Moreover, for any $\alpha\in \mathcal S$,
$$\limsup_{n\to\infty} \frac 1n\ln \tr(P_\alpha\rho_n)\leq -s(\Gamma|\alpha),\quad \pp_\rho\as$$
    and
    $$\limsup_{n\to\infty} \frac 1n\ln \tr(P_\alpha\widehat\rho_n)\leq -s(\Gamma|\alpha),\quad \pp_\rho\as$$
\end{corollary}

Next theorem expresses that the sector selection is also exponentially fast in mean. 
\begin{theorem}\label{thm:meanconv} There exists $0\leq\kappa<1$, $\tau\geq 1$ such that for all $\rho\in \mathcal D(\mathcal H)$ and $n\in\mathbb N$
    $$\mathbb E_\rho\left[\sum_{\alpha\neq\beta}\sqrt{Q_n(\alpha)Q_n(\beta)}\right]\leq \tau\sum_{\alpha\neq\beta}\sqrt{Q_0(\alpha)Q_0(\beta)}\ \kappa^n$$
    and for any positive definite $\hat \rho\in \mathcal D(\mathcal H)$,
    $$\mathbb E_\rho\left[\sum_{\alpha\neq\beta}\sqrt{\widehat{Q}_n(\alpha)\widehat{Q}_n(\beta)}\right]\leq \tau\|\hat \rho^{-\frac12}\rho \hat \rho^{-\frac12}\|_\infty \sum_{\alpha\neq\beta}\sqrt{\widehat{Q}_0(\alpha)\widehat{Q}_0(\beta)}\ \kappa^n.$$
\end{theorem}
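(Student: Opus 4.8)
The plan is to linearise the expectation over pairs and reduce the whole statement to one uniform \emph{one-step contraction} of the Hellinger (Bhattacharyya) affinity between the conditional laws $\pp_{\rho,\alpha}$ and $\pp_{\rho,\beta}$. First I would rewrite the pairwise expectation explicitly. Using $Q_n(\alpha)=\tr(E_\alpha\rho_n)=\tr(\rho E_\alpha)\,\pp_{\rho,\alpha}(\omega_1,\dots,\omega_n)/\pp_\rho(\omega_1,\dots,\omega_n)$, the factors $\pp_\rho(\omega_1,\dots,\omega_n)$ cancel against the weight of $\ee_\rho$, leaving
\[
\ee_\rho\!\left[\sqrt{Q_n(\alpha)Q_n(\beta)}\right]=\sqrt{Q_0(\alpha)Q_0(\beta)}\;A_n(\rho),\qquad A_n(\rho):=\sum_{\mathbf a\in\mathcal A^n}\sqrt{\pp_{\rho,\alpha}(\mathbf a)\,\pp_{\rho,\beta}(\mathbf a)}.
\]
Equivalently $\ee_\rho[\sqrt{Q_n(\alpha)Q_n(\beta)}]=\sum_{\mathbf a\in\mathcal A^n}\sqrt{\tr(\rho\Phi_{\mathbf a}(E_\alpha))\tr(\rho\Phi_{\mathbf a}(E_\beta))}$. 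Since $\mathcal P$ is finite and $\ee_\rho$ is linear, the theorem follows once I produce, for each pair $\alpha\neq\beta$, a constant $\gamma_{\alpha\beta}<1$ with $A_n(\rho)\le\gamma_{\alpha\beta}^{\,n}$ for all $\rho$ and $n$; then $\gamma:=\max_{\alpha\neq\beta}\gamma_{\alpha\beta}$ does the job.

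The backbone is submultiplicativity of $a_n:=\sup_\rho A_n(\rho)$. The key structural observation is that, conditionally on the first $n$ outcomes $\mathbf a$, the future of \emph{both} $\pp_{\rho,\alpha}$ and $\pp_{\rho,\beta}$ restarts from the same state: one checks $\pp_{\rho,\alpha}(\mathbf a\mathbf b)=\pp_{\rho,\alpha}(\mathbf a)\,\pp_{\rho_n(\mathbf a),\alpha}(\mathbf b)$, where $\rho_n(\mathbf a)$ is the quantum trajectory state and is \emph{independent of} $\alpha$, because the normalising map $\Phi^*_{a_n}\circ\dots\circ\Phi^*_{a_1}(\rho)$ does not involve $E_\alpha$. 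Splitting the sum defining $A_{n+m}$ along $\mathbf a\in\mathcal A^n$ and bounding the inner affinity by its supremum gives $A_{n+m}(\rho)\le\big(\sup_{\rho'}A_m(\rho')\big)A_n(\rho)$, hence $a_{n+m}\le a_n a_m$ and therefore $a_n\le a_1^{\,n}$. Consequently the entire statement reduces to the single uniform bound $a_1=\sup_{\rho\in\mathcal D}A_1(\rho)<1$ for every pair (the clean constant $1$ in the theorem forces a genuine \emph{one-step} contraction rather than merely a finite-window one, as $A_1$ sits at $n=1$).

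The main obstacle is precisely this one-step contraction, and it is where the possible absence of a full-rank invariant state must be confronted. By Cauchy--Schwarz $A_1(\rho)\le 1$, with equality at a given $\rho$ if and only if the first-step conditional laws coincide, $\pp_{\rho,\alpha}(\cdot)=\pp_{\rho,\beta}(\cdot)$ on $\mathcal A$. To exclude this uniformly I would pass to the deformed instrument with a \emph{unique} invariant state, under which $\pp_\alpha$ becomes an ergodic outcome law; distinctness of sectors ($\pp_\alpha\neq\pp_\beta$) together with the purification hypothesis \pur then prevents the conditional laws from being proportional. The delicate point, and the genuine novelty over \cite{amini2023exponential}, is the behaviour on the boundary $\tr(\rho E_\alpha)\tr(\rho E_\beta)\to 0$: there the affinity is an indeterminate $0/0$ ratio not dominated by any interior full-rank invariant state, so the compactness argument on $\mathcal D$ must be supplemented by a finer analysis of $\limsup$ of the ratio as $\rho$ approaches this boundary, controlling $\sqrt{\tr(\rho\Phi_{\mathbf a}(E_\alpha))\tr(\rho\Phi_{\mathbf a}(E_\beta))}$ against $\sqrt{\tr(\rho E_\alpha)\tr(\rho E_\beta)}$ through the deformed dynamics. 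I expect essentially all the work to concentrate here.

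Finally, the filter bound follows from the same affinity estimate by an operator-domination argument. Writing $\sqrt{\widehat Q_n(\alpha)\widehat Q_n(\beta)}=\sqrt{\widehat Q_0(\alpha)\widehat Q_0(\beta)}\,\pp_{\hat\rho}(\omega_{1:n})^{-1}\sqrt{\pp_{\hat\rho,\alpha}(\omega_{1:n})\pp_{\hat\rho,\beta}(\omega_{1:n})}$ and taking $\ee_\rho$, the expectation is over $\pp_\rho$ while the integrand is built from $\hat\rho$. Since $\hat\rho$ is positive definite, $\rho\le\|\hat\rho^{-\frac12}\rho\hat\rho^{-\frac12}\|_\infty\,\hat\rho$ as operators; applying the positive maps $\Phi_{\mathbf a}$ and tracing yields $\pp_\rho(\mathbf a)\le\|\hat\rho^{-\frac12}\rho\hat\rho^{-\frac12}\|_\infty\,\pp_{\hat\rho}(\mathbf a)$. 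This converts the $\pp_\rho$-average into $\|\hat\rho^{-\frac12}\rho\hat\rho^{-\frac12}\|_\infty\sum_{\mathbf a}\sqrt{\pp_{\hat\rho,\alpha}(\mathbf a)\pp_{\hat\rho,\beta}(\mathbf a)}=\|\hat\rho^{-\frac12}\rho\hat\rho^{-\frac12}\|_\infty\,A_n(\hat\rho)\le\|\hat\rho^{-\frac12}\rho\hat\rho^{-\frac12}\|_\infty\,\gamma^n$, and summing over pairs gives the second inequality with the same $\gamma$.
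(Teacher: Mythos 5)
Your setup is sound: the identity $\ee_\rho[\sqrt{Q_n(\alpha)Q_n(\beta)}]=\sqrt{Q_0(\alpha)Q_0(\beta)}\,A_n(\rho)$ with $A_n(\rho)=\sum_{\mathbf a\in\mathcal A^n}\sqrt{\pp_{\rho,\alpha}(C_{\mathbf a})\pp_{\rho,\beta}(C_{\mathbf a})}$ is correct, the restart identity $\pp_{\rho,\alpha}(C_{\mathbf{ab}})=\pp_{\rho,\alpha}(C_{\mathbf a})\pp_{\rho_n(\mathbf a),\alpha}(C_{\mathbf b})$ with a trajectory state independent of $\alpha$ is correct, and hence so is the submultiplicativity $a_{n+m}\le a_n a_m$ for $a_n=\sup_\rho A_n(\rho)$; the filter reduction via $\pp_\rho\le\|\hat\rho^{-1/2}\rho\hat\rho^{-1/2}\|_\infty\,\pp_{\hat\rho}$ is exactly the paper's. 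The genuine gap is the step you reduce everything to: the claim that $a_1<1$. This is false in general. Two distinct sectors satisfy $\pp_\alpha\neq\pp_\beta$ as measures on $\Omega$, but their one--letter marginals may coincide: take for instance an instrument realizing, on one minimal enclosure, an i.i.d.\ uniform outcome process and, on another, a hidden Markov outcome process with uniform one--step marginal but nontrivial correlations. Then $A_1(\rho)=1$ for suitable $\rho$, Cauchy--Schwarz is saturated at horizon one, and no one--step contraction exists; the sectors only separate at some finite horizon $N>1$. This is precisely why the paper proves an $N$-step identifiability statement (\Cref{Prop:identif}) and an $N$-step contraction $\kappa<1$ (\Cref{lem:kappa}), and then combines strict contraction every $N$ steps with mere non-expansiveness ($a_1\le1$, Eq.~\eqref{eq:decreasing W}) to get $\gamma=\kappa^{1/N}$. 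Your remark that the ``clean constant forces a one-step contraction'' has the logic backwards: the constant is obtained from $a_n\le a_N^{\lfloor n/N\rfloor}\cdot 1$, not from $a_1<1$.

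Two further points. First, even granting your reduction, the part you flag as ``where essentially all the work concentrates'' --- ruling out saturation of Cauchy--Schwarz uniformly over $\rho$, including near the boundary $\tr(\rho E_\alpha)\tr(\rho E_\beta)\to0$ --- is left entirely as a sketch; the paper resolves it by first separating the \emph{invariant} laws $\pp_\alpha\neq\pp_\beta$ at some horizon $n_0$, then transporting this separation to arbitrary conditional laws $\pp_{\varrho,\alpha},\pp_{\rho,\beta}$ via the Ces\`aro/shift convergence of \Cref{lem:pp Cesaro convergence} applied to the deformed instruments, which costs a further $M$ steps and yields $N=n_0+M$. Nothing in your outline supplies this argument. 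Second, you invoke a purification hypothesis \pur{} which is neither assumed nor needed anywhere in this paper; the separation of sectors comes from the definition $\pp_\alpha\neq\pp_\beta$ for $\alpha\neq\beta$, not from purification.
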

Note that, the quantity $\sum_{\alpha\neq\beta}\sqrt{Q_n(\alpha)Q_n(\beta)}$ which serves as Lyapounov function is related to Réyni's relative entropy and Hellinger's distance. It is often used in the context of Bayesian Theory -- see \emph{e.g.} \cite{vaart2000asymptotic}.

We have again a corollary expressing the selection of $\mathcal K_\alpha$.

\begin{corollary}\label{coroo} There exists $0\leq\kappa<1$, $\tau\geq 1$ such that for all $\rho\in \mathcal D(\mathcal H)$ and $n\in\mathbb N$
    $$\mathbb E_\rho\left[\sum_{\alpha\neq\beta}\sqrt{\tr(P_\alpha\rho_n) \tr(P_\beta\rho_n)}\right]\leq \tau\sum_{\alpha\neq\beta}\sqrt{Q_0(\alpha)Q_0(\beta)}\ \kappa^n$$
    and for any positive definite $\hat \rho\in \mathcal D(\mathcal H)$,
    $$\mathbb E_\rho\left[\sum_{\alpha\neq\beta}\sqrt{\tr(P_\alpha\hat\rho_n)\tr(P_\beta\hat\rho_n)}\right]\leq \tau\|\hat \rho^{-\frac12}\rho \hat \rho^{-\frac12}\|_\infty \sum_{\alpha\neq\beta}\sqrt{\widehat{Q}_0(\alpha)\widehat{Q}_0(\beta)}\ \kappa^n.$$
\end{corollary}

\section{Invariant states and POVMs, \Cref{thm:sectors POVM,thm:sectors ergodic,thm:sectors and partition} proofs}\label{sec:proof sectors}
As a preliminary to these proofs,  we establish some technical results.
\begin{lemma}
    \label{lem:pp lipshitz}
    For any $\rho,\sigma\in \mathcal D(\mathcal H)$,
    $$\sup_{A\in \mathcal F}|\pp_\rho(A)-\pp_\sigma(A)|\leq \|\rho-\sigma\|_{\rm tr}$$
    with $\|\cdot\|_{\rm tr}$ the trace norm.
\end{lemma}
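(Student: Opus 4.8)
The plan is to reduce the supremum over the full cylinder $\sigma$-algebra $\mathcal F$ to the finite levels $\mathcal F_n$, bound the discrepancy there by a single telescoping estimate uniform in $n$, and then pass to the limit by a monotone class argument. First I would fix $n$ and recall that every $A\in\mathcal F_n$ is a finite disjoint union of cylinders, $A=\bigsqcup_{\mathbf a\in S}C_{\mathbf a}$ for some $S\subseteq\mathcal A^n$. Writing $\tau=\rho-\sigma$, which is self-adjoint with $\tr\tau=0$, the definition of $\pp_\rho$ gives $\pp_\rho(C_{\mathbf a})-\pp_\sigma(C_{\mathbf a})=\tr(\tau\,\Phi_{\mathbf a}(\id))$, so that $|\pp_\rho(A)-\pp_\sigma(A)|\le\sum_{\mathbf a\in\mathcal A^n}|\tr(\tau\,\Phi_{\mathbf a}(\id))|$ by the triangle inequality.

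The core of the argument is to bound this last sum by $\|\tau\|_{\rm tr}$ uniformly in $n$. I would use the Jordan decomposition $\tau=\tau_+-\tau_-$ into orthogonal positive parts, for which $\|\tau\|_{\rm tr}=\tr\tau_++\tr\tau_-$. Since each $\Phi_{\mathbf a}$ is CP, $\Phi_{\mathbf a}(\id)\ge0$, hence $\tr(\tau_\pm\,\Phi_{\mathbf a}(\id))\ge0$ and $|\tr(\tau\,\Phi_{\mathbf a}(\id))|\le\tr(\tau_+\,\Phi_{\mathbf a}(\id))+\tr(\tau_-\,\Phi_{\mathbf a}(\id))$. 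Summing over $\mathbf a\in\mathcal A^n$ and using that $\sum_{\mathbf a\in\mathcal A^n}\Phi_{\mathbf a}=\Phi^n$ by bilinearity of composition, together with $\Phi(\id)=\id$ (so $\Phi^n(\id)=\id$), the positive and negative contributions collapse to $\tr(\tau_\pm\,\id)=\tr\tau_\pm$. This yields $\sum_{\mathbf a}|\tr(\tau\,\Phi_{\mathbf a}(\id))|\le\tr\tau_++\tr\tau_-=\|\rho-\sigma\|_{\rm tr}$, and therefore $\sup_{A\in\mathcal F_n}|\pp_\rho(A)-\pp_\sigma(A)|\le\|\rho-\sigma\|_{\rm tr}$ for every $n$.

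Finally I would promote this uniform bound on the algebra $\bigcup_n\mathcal F_n$ to all of $\mathcal F$. Setting $\mu=\pp_\rho-\pp_\sigma$, the class $\mathcal M=\{A\in\mathcal F:|\mu(A)|\le\|\rho-\sigma\|_{\rm tr}\}$ contains the generating algebra $\bigcup_n\mathcal F_n$ and is closed under increasing and decreasing limits: for $A_k\uparrow A$ (resp.\ $A_k\downarrow A$), continuity of the finite measures $\pp_\rho,\pp_\sigma$ gives $\mu(A_k)\to\mu(A)$, and the bound is preserved in the limit. By the monotone class theorem $\mathcal M=\mathcal F$, which is the claim. The only genuinely delicate point is the core estimate of the middle paragraph; everything hinges on the two structural facts that each $\Phi_{\mathbf a}(\id)$ is positive and that these operators sum over $\mathbf a\in\mathcal A^n$ to $\id$, which is precisely what turns the cylinder discrepancies into a single trace of $\tau_\pm$ against the identity. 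The extension step is routine once the uniform-in-$n$ estimate is in hand.
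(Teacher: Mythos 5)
Your proof is correct, but it follows a genuinely different route from the paper's. The paper constructs the consistent family $P_n(A)=\sum_{\mathbf a}\Phi_{a_1}\circ\dotsb\circ\Phi_{a_n}(\id)$ on $\mathcal F_n$, invokes the Kolmogorov extension theorem for POVMs (Tumulka) to obtain a single POVM $P:\mathcal F\to\mathcal B(\mathcal H)$ with $\pp_\rho(A)=\tr(\rho P(A))$, and then concludes in one line from H\"older's inequality $|\tr(X^*Y)|\le\|X\|_{\rm tr}\|Y\|_\infty$ together with $\|P(A)\|_\infty\le 1$. You instead stay at the finite levels, replace the operator-norm bound on $P_n(A)$ by a Jordan decomposition of $\tau=\rho-\sigma$ combined with the resolution $\sum_{\mathbf a\in\mathcal A^n}\Phi_{\mathbf a}(\id)=\Phi^n(\id)=\id$, and then pass to $\mathcal F$ by a monotone class argument. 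Both arguments ultimately rest on the same two structural facts (positivity of each $\Phi_{\mathbf a}(\id)$ and their summing to $\id$); your version is more elementary and self-contained, since it avoids citing the POVM extension theorem, at the price of an extra (routine) limiting step, whereas the paper's version produces the global POVM $P$ as a reusable object and gets the bound for all $A\in\mathcal F$ simultaneously. All the individual steps you give check out: the reduction of $A\in\mathcal F_n$ to a disjoint union of length-$n$ cylinders, the estimate $|\tr(\tau\Phi_{\mathbf a}(\id))|\le\tr(\tau_+\Phi_{\mathbf a}(\id))+\tr(\tau_-\Phi_{\mathbf a}(\id))$, the collapse of the sum to $\tr\tau_++\tr\tau_-=\|\rho-\sigma\|_{\rm tr}$, and the closure of the class $\{A:|\mu(A)|\le\|\rho-\sigma\|_{\rm tr}\}$ under monotone limits over the generating algebra $\bigcup_n\mathcal F_n$.
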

\begin{proof}
    For $n\in \nn$, let $P_n: \mathcal F_n\to \mathcal B(\mathcal H)$ be defined by
    $$P_n(A)=\sum_{\mathbf{a}\in \mathrm{ind}_A}\Phi_{a_1}\circ\dotsb\circ\Phi_{a_n}(\id)$$
    where $\mathrm{ind}_A\subset \mathcal A^n$ is such that $A=\cup_{\mathbf a\in \mathrm{ind}_A}{\mathbf{a}}$. Since $\Phi(\id)=\sum_{a\in \mathcal A}\Phi_a(\id)=\id$,
    $$P_{n+1}(A)=P_n(A)$$
    for any $A\in \mathcal F_n$. Thus, by Kolmogorov extension theorem for POVMs \cite[Corollary~1]{tumulka2008kolmogorov}, there exists a POVM $P:\mathcal F\to \mathcal B(\mathcal H)$, such that
    $$\pp_{\rho}(A)=\tr(\rho P(A))$$
    for any $A\in \mathcal F$.
    
    Then, since $P(A)\leq \id$, using Holder's inequality for matrix Schatten norms, $|\tr(X^*Y)|\leq \|X\|_{\rm tr}\|Y\|_\infty$ for any matrix $X, Y$ and
    $$|\pp_{\rho}(A)-\pp_\sigma(A)|\leq \|\rho-\sigma\|_{\rm tr}\ \|P(A)\|_\infty\leq \|\rho-\sigma\|_{\rm tr}$$
    for any $A\in \mathcal F$. That concludes the theorem proof.
\end{proof}
With respect to the left shift on $\Omega$:
$$\theta(\omega_1,\omega_2,\dotsc)=(\omega_2,\omega_3,\dotsc),$$
the measures $\pp_\rho$ have strong convergence properties.
\begin{lemma}
    \label{lem:pp Cesaro convergence}
    Let $$T_\infty:=\lim_{n\to\infty} \frac1n\sum_{k=1}^{n}{\Phi^*}^k.$$ Then,
    $$\lim_{n\to\infty}\sup_{\rho\in \mathcal D(\mathcal H)}\sup_{A\in \mathcal F}\left|\frac1n\sum_{k=1}^n\pp_\rho\circ\theta^{-k}(A) - \pp_{T_\infty(\rho)}(A)\right|=0.$$
\end{lemma}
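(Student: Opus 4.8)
The plan is to reduce the Cesàro average of shifted measures to a single measure $\pp_{T_n(\rho)}$, and then to combine the Lipschitz estimate of \Cref{lem:pp lipshitz} with the norm convergence $T_n\to T_\infty$. The first and essential step is the identity
$$\pp_\rho\circ\theta^{-k}=\pp_{{\Phi^*}^k(\rho)},\qquad k\in\nn,\ \rho\in\mathcal D.$$
I would prove this on the $\pi$-system of cylinders and then invoke uniqueness of the measure extension. For $\mathbf a=(a_1,\dots,a_p)$ one has $\theta^{-k}C_{\mathbf a}=\bigsqcup_{\mathbf b\in\mathcal A^k}C_{\mathbf b\mathbf a}$, so that
$$\pp_\rho(\theta^{-k}C_{\mathbf a})=\sum_{\mathbf b\in\mathcal A^k}\tr\big(\rho\,\Phi_{\mathbf b}\circ\Phi_{\mathbf a}(\id)\big)=\tr\big(\rho\,\Phi^k(\Phi_{\mathbf a}(\id))\big)=\tr\big({\Phi^*}^k(\rho)\,\Phi_{\mathbf a}(\id)\big)=\pp_{{\Phi^*}^k(\rho)}(C_{\mathbf a}),$$
using $\sum_{\mathbf b\in\mathcal A^k}\Phi_{\mathbf b}=\Phi^k$ and the defining property of the dual $\Phi^*$. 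Since both $\pp_\rho\circ\theta^{-k}$ and $\pp_{{\Phi^*}^k(\rho)}$ are probability measures on $(\Omega,\mathcal F)$ agreeing on the cylinders, which generate $\mathcal F$, they coincide.

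Writing $T_n=\tfrac1n\sum_{k=1}^n{\Phi^*}^k$, each ${\Phi^*}^k(\rho)$ is a state, hence so is $T_n(\rho)$, and the affinity of $\rho\mapsto\pp_\rho$ gives $\tfrac1n\sum_{k=1}^n\pp_\rho\circ\theta^{-k}=\pp_{T_n(\rho)}$. Applying \Cref{lem:pp lipshitz} to the states $T_n(\rho)$ and $T_\infty(\rho)$ I then obtain, uniformly in $A\in\mathcal F$,
$$\Big|\tfrac1n\sum_{k=1}^n\pp_\rho\circ\theta^{-k}(A)-\pp_{T_\infty(\rho)}(A)\Big|\leq\|T_n(\rho)-T_\infty(\rho)\|_{\rm tr}.$$
Taking the supremum over $\rho\in\mathcal D$ and $A\in\mathcal F$ bounds the left-hand side by $\sup_{\rho\in\mathcal D}\|(T_n-T_\infty)(\rho)\|_{\rm tr}$, which tends to $0$: in finite dimension $\Phi^*$ is trace-norm contractive, hence power-bounded, and the mean ergodic theorem gives $T_n\to T_\infty$ in operator norm, so that $\sup_{\rho\in\mathcal D}\|(T_n-T_\infty)(\rho)\|_{\rm tr}\leq\|T_n-T_\infty\|_{\rm tr\to tr}\to0$ since $\|\rho\|_{\rm tr}=1$ on $\mathcal D$.

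The only delicate point is the measure-theoretic identity of the first step: the computation on cylinders is immediate, but one must justify that agreement on this generating $\pi$-system propagates to all of $\mathcal F$, which is exactly the uniqueness clause of the Kolmogorov extension already used to define $\pp_\rho$. Everything else is routine: the reduction to a single state is pure affinity, the uniformity in $A$ is inherited from the supremum already present in \Cref{lem:pp lipshitz}, and the uniformity in $\rho$ is automatic because $\mathcal D$ is trace-norm bounded by $1$ while $T_n\to T_\infty$ in operator norm. The existence of the limit $T_\infty$ is granted by the lemma's statement, so no further spectral analysis of $\Phi^*$ is required.
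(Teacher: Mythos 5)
Your proposal is correct and follows essentially the same route as the paper: identify $\pp_\rho\circ\theta^{-k}=\pp_{{\Phi^*}^k(\rho)}$ on cylinders, use affinity of $\rho\mapsto\pp_\rho$ to rewrite the Ces\`aro average as $\pp_{T_n(\rho)}$, and conclude via \Cref{lem:pp lipshitz} together with the norm (hence $\rho$-uniform) convergence $T_n\to T_\infty$, which the paper obtains by citing the mean-ergodic-type result for trace-preserving positive maps. The extra care you take with the $\pi$-system extension and with noting $T_\infty(\rho)\in\mathcal D$ only makes explicit what the paper leaves implicit.
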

\begin{proof}
    First, since $\Phi^*$ is a positive map preserving the trace, $T_\infty$ is well defined and the convergence to it holds in norm -- see \cite[Proposition~6.3]{Wolf}.
    
    Second, for $A=C_{\mathbf{a}}$ for some $\mathbf{a}\in \Omega_{\rm fin.}$, 
    $$\pp_\rho\circ\theta^{-1}({\mathbf{a}})=\sum_{b\in \mathcal A}\pp_\rho(b,a_1,\dotsc,a_p).$$
    Then,
    $$\pp_\rho\circ\theta^{-1}({\mathbf{a}})=\tr(\rho\Phi\circ\Phi_{a_1}\circ\dotsb\circ\Phi_{a_p}(\id)).$$
    Hence, $\pp_\rho\circ\theta^{-1}=\pp_{\Phi^*(\rho)}$. Then, since $\rho\mapsto \pp_\rho$ is affine by definition,
    $$\frac1n\sum_{k=1}^n\pp_\rho\circ\theta^{-k}=\pp_{\frac1n\sum_{k=1}^n{\Phi^*}^k(\rho)}.$$
    Finally, \Cref{lem:pp lipshitz} implies that
    $$\sup_{A\in \mathcal F}\left|\frac1n\sum_{k=1}^n\pp_\rho\circ\theta^{-k}(A) - \pp_{T_\infty(\rho)}(A)\right|\leq\left\Vert\frac1n\sum_{k=1}^n{\Phi^*}^k(\rho)-T_{\infty}(\rho)\right\Vert_{\textrm{tr}}$$
    and the fact that the convergence to $T_\infty(\rho)$ is uniform in $\rho$ yield the lemma.
\end{proof}

Let us now introduce a suitable decomposition of $\mathcal J$ and $\mathcal H$ with respect to the fixed points of $\Phi^*$.
Following \cite[Theorem~6.14]{Wolf}, the set $\mathcal F_{\Phi^*}$ of fixed points of $\Phi^*$ is given by
\begin{equation}\label{eq:def fixed points}
    \mathcal F_{\Phi^*}=U\left(0_{d_0}\oplus\bigoplus_{k=1}^K M_{d_k}(\cc)\otimes \varrho_k\right)U^*
\end{equation}
with $U$ a unitary operator on $\mathcal H$, $\{\varrho_k\}_{k=1}^K$ a set of positive definite density matrices of respective dimensions $m_k\times m_k$ such that $d_0+d_1+\dotsb+d_K+m_1+\dotsb+m_K=\dim \mathcal H$ and $\dim \mathcal T=d_0$.

For $k\in\{1,\dotsc,K\}$ and $u_k\in S^{d_k-1}(\cc)$, let 
$$\mathcal H_{k,u}=\operatorname{range} U\left(0_{d_0}\oplus \left(\bigoplus_{l<k} 0_{d_l}\otimes 0_{m_l}\right)\oplus(u_ku_k^*\otimes \varrho_k)\oplus \left(\bigoplus_{l>k} 0_{d_l}\otimes 0_{m_l}\right)\right)U^*.$$
Then \Cref{eq:def fixed points} and the positivity of $\Phi^*$ imply $\Phi^*(\mathcal B(\mathcal H_{k,u}))\subset\mathcal B(\mathcal H_{k,u})$. 
Again, positivity implies that for any $a\in \mathcal A$,
$$\Phi_a^*(\mathcal B(\mathcal H_{k,u}))\subset \mathcal B(\mathcal H_{k,u}).$$

For simplicity, from now on, we work in a basis such that $U=\id_{\mathcal H}$ and summarize all the $0$ matrices in one notation $0$ when the context is clear. Then, by linearity, for any $k\in\{1,\dots,K\}$, $a\in \mathcal A$ and $x_k\in M_{d_k}(\cc)$,
$$\Phi_a^*((x_k\otimes M_{m_k}(\cc))\oplus 0)\subset(x_k\otimes M_{m_k}(\cc))\oplus 0.$$
Thus, for any $a\in \mathcal A$, there exists $\Psi_a:\mathcal B(\mathcal H)\to \mathcal B(\mathcal T)$ and for any $k\in\{1,\dotsc,K\}$, $\Phi_{a,k}: M_{m_k}(\cc)\to M_{m_k}(\cc)$ such that
\begin{align}\label{eq:decomp Phi_a}
    \Phi_a\equiv\Psi_a\oplus\bigoplus_{k=1}^K \left(\id_{M_{d_k}(\cc)}\otimes \Phi_{a,k}\right).
\end{align}
Moreover, each $\Phi_k=\sum_{a\in \mathcal A}\Phi_{a,k}$ is an irreducible completely positive unital map from $M_{m_k}(\cc)$ to itself since $\varrho_k$ is positive definite and is the unique $\Phi_k^*$-invariant state.

We now turn to the proof of existence of a POVM verifying the two conditions of \Cref{thm:sectors POVM}. The decompositions of $\mathcal J$ and $\mathcal H$ we just introduced will lead us to the construction of the POVM and related subspaces. We prove \Cref{prop:simplex,thm:sectors ergodic,thm:sectors and partition} along the way.

Let $\{u_{k,i}\}_{i=1}^{d_k}$be an orthonormal basis of $\cc^{d_k}$ and $\mathcal H_k=\bigoplus_{i=1}^{d_k}\mathcal H_{k,u}$. Then, for any two $\Phi^*$-invariant state $\varrho, \varrho'$ such that $\supp\varrho\subset \mathcal H_k$ and $\supp\varrho'\subset \mathcal H_k$, \Cref{eq:decomp Phi_a} implies
$$\pp_\varrho=\pp_{\varrho'}.$$
Let us denote this common shift invariant measure $\pp_k$. 

Assume $\pp_\rho$ is $\theta$-invariant (\emph{i.e.} $\pp_\rho\in \mathcal M_\theta(\mathcal J)$). Then, \Cref{lem:pp Cesaro convergence} implies $\pp_\rho=\pp_{T_\infty(\rho)}$. Hence, there exist $\varrho\in \mathcal D_{\Phi^*}$ such that $\pp_\rho=\pp_{\varrho}$. Since $\rho\mapsto \pp_\rho$ is affine, \Cref{eq:def fixed points} implies $\mathcal M_\theta(\mathcal J)$ is the convex hull of $\{\pp_k\}_{k=1}^K$ and \Cref{prop:simplex} follows.

The measure $\pp_k$ relates the statistics of $\mathcal J_k=\{\Phi_{a,k}\}_{a\in \mathcal A}$ with respect to $\varrho_k$. Since $\Phi_k$ is irreducible \cite[Corollary~5]{Maassen2000AnET} implies $\pp_k$ is $\theta$-ergodic and \Cref{thm:sectors ergodic} is proved.

Since they are $\theta$-ergodic, for each $k\in\{1,\dotsc,K\}$, $\pp_k$ is an extreme point of $\mathcal M_\theta(\mathcal J)$.
Let $k\sim k'$ if and only if $\pp_k=\pp_{k'}$. Then, the set of sectors $\mathcal S$ is in bijection with $\{1,\dotsc,K\}/\sim$.

Item (1) of \Cref{thm:sectors and partition} is a direct consequence of \Cref{thm:sectors ergodic}. 
For Item~(2), fix $\rho\in \mathcal D(\mathcal H)$. \Cref{lem:pp Cesaro convergence} and the $\theta$-invariance of $\cup_{\alpha\in \mathcal S}\Omega_\alpha$ imply
$$\pp_\rho(\cup_{\alpha\in\mathcal S}\Omega_\alpha)=\pp_{T_\infty(\rho)}(\cup_{\alpha\in\mathcal S}\Omega_\alpha).$$
Then, $\pp_{T_\infty(\rho)}\in \mathcal M_\theta(\mathcal J)$ yields \Cref{thm:sectors and partition}.

For \Cref{thm:sectors POVM}, for any $\alpha\in \mathcal S$, let $\mathcal K_\alpha=\bigoplus_{k: \pp_k=\pp_\alpha} \mathcal H_k$.
Then, by definition, $\mathcal H=\mathcal T\oplus \bigoplus_{\alpha\in \mathcal S}\mathcal K_\alpha$.
It follows that each $\mathcal K_\alpha$ is a $\Phi$-invariant subspace or an enclosure in the language of \cite{Baumgartner2012structures, carbone2016irreducible,carbone2021absorption}.

Then, following \cite[Proposition~6]{carbone2021absorption}, 
$$E_\alpha=\lim_{n\to\infty}\Phi^n(P_\alpha),$$
where $P_\alpha$ is the orthogonal projector onto $\mathcal K_\alpha$, is an absorption operator. Therefore $E_\alpha$ is $\Phi$-invariant and Item~(1) of \Cref{thm:sectors POVM} holds.

Fix $\rho\in \mathcal D(\mathcal H)$. By definition of $\mathcal T$,
$$1=1-\lim_{n\to\infty}\tr(\rho\Phi^n(P_{\mathcal T}))=\lim_{n\to\infty}\sum_{\alpha\in \mathcal S}\tr(\rho\Phi^n(P_\alpha))=\sum_{\alpha\in \mathcal S}\tr(\rho E_\alpha)$$
with $P_{\mathcal T}$ the orthogonal projector onto $\mathcal T$.
It follows, $\sum_{\alpha\in \mathcal S}E_\alpha=\id_{\mathcal H}$. Hence $\{E_\alpha\}_{\alpha}$ is a POVM.

Assume $\rho\in \mathcal D(\mathcal H)$ is $\Phi^*$-invariant. By definition of the subspaces $\mathcal K_\alpha$ and \Cref{eq:def fixed points},
it is a convex combination of invariant states $\{\rho_\alpha\}_{\alpha\in \mathcal S}$ with ranges included in $\mathcal K_\alpha$ respectively.
Since \cite[Proposition~6]{carbone2021absorption} implies $P_\alpha E_\beta P_\alpha=\delta_{\alpha,\beta}P_\alpha$, using $\Phi_a^*(\mathcal B(\mathcal K_\alpha))\subset\mathcal B(\mathcal K_\alpha)$,
$$\pp_{\rho,E_\alpha}=\pp_{\rho_\alpha}.$$
Since, for any invariant state $\rho_\alpha$ with range included in $\mathcal K_\alpha$, $\pp_{\rho_\alpha}=\pp_\alpha$,
Item~(2) of \Cref{thm:sectors POVM} holds.

Concerning the subspaces $\mathcal K_\alpha$, using again \cite[Proposition~6]{carbone2021absorption}, $E_\alpha=P_\alpha+T_\alpha$ where $T_\alpha$ is a positive semi-definite operator whose range is orthogonal to $\mathcal K_\alpha$.
Since $\mathcal K_\alpha\perp\mathcal K_\beta$ for any $\alpha\neq \beta$ by construction, the same proposition yields that actually the range of $T_\alpha$ is included in $\mathcal T$. And by construction, $\mathcal K_\alpha$ is an invariant subspace (or enclosure) and $\{\mathcal K_\alpha\}_{\alpha\in \mathcal S}$ is an orthogonal partition of $\mathcal T^\perp$.

We now turn to the proof of the uniqueness of the POVM. Assume $\{N_\alpha\}_{\alpha\in \mathcal S}$ is a POVM verifying (1-2). By Item~(1), $\Phi(N_\alpha)=N_\alpha$, which implies $\pp_{\rho,N_\alpha}$ is well defined when $\tr(N_\alpha\rho)>0$. Item~(2) implies then, $\pp_{\rho,N_\alpha}=\pp_\alpha$ for any $\rho\in \mathcal D_{\Phi^*}$ such that $\tr(\rho N_\alpha)>0$. Since by definition $\pp_\rho=\sum_{\alpha\in \mathcal S}\tr(N_\alpha\rho)\pp_{\rho, N_\alpha}$, it follows that for any $\rho\in \mathcal D_{\Phi^*}$,
$$\pp_\rho=\sum_{\alpha\in \mathcal S}\tr(\rho N_\alpha) \pp_\alpha.$$
Therefore, $\pp_\alpha\perp\pp_\beta$ for any two distinct $\alpha,\beta\in \mathcal S$ implies $\tr(N_\alpha\rho)=\tr(E_\alpha\rho)$ for any $\rho\in \mathcal D_{\Phi^*}$. Let $\rho\in \mathcal D(\mathcal H)$ be arbitrary, by $\Phi$-invariance of $N_\alpha$ and $E_\alpha$, $T_\infty^*(N_\alpha)=N_\alpha$ and $T_\infty^*(E_\alpha)=E_\alpha$, with $T_\infty$ defined in \Cref{lem:pp Cesaro convergence}. Since $T_\infty(\mathcal D(\mathcal H))=\mathcal D_{\Phi^*}$,
$$\tr(N_\alpha\rho)=\tr(N_\alpha T_\infty(\rho))=\tr(E_\alpha T_\infty(\rho))=\tr(E_\alpha\rho).$$
Hence, for any $\alpha\in\mathcal S$, $N_\alpha=E_\alpha$ and
 \Cref{thm:sectors POVM} is proved. \hfill\qed

\begin{remark}
    Birkhoff's ergodic theorem implies the sets $\Omega_\alpha$ can be chosen as
    $$\Omega_\alpha=\left\{\omega:\lim_n\tfrac1nN_n(\mathbf{a})=\pp_\alpha(\mathbf{a}), \forall \mathbf{a}\in \Omega_{\rm fin.}\right\}$$
    with $N_n(\mathbf{a})=\operatorname{Card}\{1\leq k\leq n-|a|+1: \omega_k=a_1,\dotsc,\omega_{k+|a|-1}=a_p\}$ where $|\mathbf{a}|$ is the length of $\mathbf{a}$.
\end{remark}

\section{Examples}\label{sec:xpl}
Before we delve into the proofs concerning the convergence, we illustrate our results with a few examples.

\subsection{Irreducible channels}
Assume $\Phi$ is irreducible. Then, by Perron-Frobenius Theorem -- see \cite{EHK} -- there exist a unique $\Phi^*$-invariant state $\varrho$ and it is positive definite. Hence both $\mathcal D_{\Phi^*}$ and $\mathcal E_\Phi$ are singleton and $\mathcal K=\mathcal H$ is the unique sector. Hence, the convergence is instantaneous.

\subsection{Identity channel}\label{sec:xpl identity} 
We present a drastically different example where there is a unique sector. Consider the identity channel 
$$\Phi:X\mapsto X.$$
Any associated instrument is given by $\Phi_a=p_a\Phi$ for $(p_a)_{a\in \mathcal A}$ a probability vector.

The set of fixed points of $\Phi^*$ is the whole algebra $\mathcal B(\mathcal H)$.
Since $\mathbb P_{\varrho}=\mathbb P_{\varrho'}$ for any $\varrho,\varrho'\in \mathcal D(\mathcal H)$, there is only one sector $\mathcal K=\mathcal H$ and the convergence is instantaneous.

\subsection{Quantum non-demolition measurement and generalization}\label{sec:no transient}\label{sec:xpl qnd}
In this example and the following ones, we focus on perfect instruments. Let us recall their definition. Stinespring's theorem implies there exist a finite alphabet $\mathcal A$ and $(K_a)_{a\in \mathcal A}\in \mathcal B(\mathcal H)^{\mathcal A}$ such that
$$\Phi(X)=\sum_{a\in \mathcal A} K_a^*X K_a$$
with $\sum_{a\in \mathcal A} K_a^*K_a=\textrm{Id}_{\mathcal H}$. Then
$$\Phi_a(X)=K_a^*XK_a,\quad a\in \mathcal A$$
defines an instrument $\mathcal J=\{\Phi_a\}_{a\in \mathcal A}$. Such an instrument is called perfect since eauch $\Phi_a$ has, at most, Kraus rank $1$. General instruments can always be obtained through sums of convex combinations of perfect instruments.

The present example is the one studied in \cite{amini2023exponential} and is a generalization of the QND model studied in \cite{BenrardBauer,BenoistAHP}. In the QND model all the subspaces $\mathcal H_\alpha$ we shall introduce are all one dimensional.

Consider block diagonal Kraus operators. Namely, $\mathcal H=\cc^d$ and
$$
K_a=\left(\begin{array}{cccc}
  K_{1,a}   & 0&\cdots &0 \\
   0  & \ddots&\ddots&\vdots\\
   \vdots&\ddots&\ddots&0\\
   0&\cdots&0&K_{m,a}
\end{array}\right)
$$
where $K_{i,a} \in M_{d_i}(\mathbb{C})$, for $i=1,\ldots,m$. This corresponds to a decomposition of $\mathcal H=\cc^d$ into orthogonal subspaces
$$
\mathcal{H}=\bigoplus_{i=1}^{m} \mathcal{H}_i, \quad \text{where } \mathcal{H}_i \equiv \mathbb{C}^{d_i}.
$$
For $i=1,\dotsc,m$, let $\Phi^{(i)}:\mathcal B(\mathcal H_i)\to \mathcal B(\mathcal H_i)$ be defined by
$$\Phi^{(i)}:X\mapsto \sum_{a\in \mathcal A}K_{i,a}^*XK_{i,a}.$$
Since $\Phi(\id_{\mathcal H})=\id_{\mathcal H}$,
$$
\Phi_i(\id_{\mathcal H_i}) = \id_{\mathcal H_i},
$$
so, it is a quantum channel on $\mathcal B(\mathcal H_i)$. It is such that for any $X\in \mathcal B(\mathcal H_\alpha)$,
$$\Phi(X\oplus 0_{\mathcal B(\oplus_{j\neq i}\mathcal H_j)})=\Phi_\alpha(X)\oplus 0_{\mathcal B(\oplus_{j\neq i}\mathcal H_j)}.$$

Assume that for $i=1,\dotsc,m$, $\Phi_i$ is irreducible with unique positive definite invariant state $\varrho_i$. 
By abuse of notation we also denote $\varrho_i$ the state in $\mathcal D(\mathcal H)$ defined by $\varrho_i\oplus 0_{\mathcal B(\oplus_{j\neq i}\mathcal H_j)}$. 

We assume that the probability measures with respect to invariant states satisfy
$$
\mathbb{P}_{\varrho_i} \neq \mathbb{P}_{\varrho_j}, \quad \text{for } i \neq j.
$$
Then the sectors subspaces are given by
$$
\mathcal{K}_{\alpha} = \operatorname{supp} \varrho_\alpha=\mathcal H_\alpha, \quad \alpha \in \mathcal{S},
$$
where $\mathcal{S} = \{1, \ldots, m\}.$

The main assumption in the present example is the absence of transient part, that is, $\mathcal{T} = \{0\}$. Let  $\rho^{\rm ch} = \frac{\textrm{Id}_\mathcal{H}}{\textrm{dim} \mathcal{H}}$ be the initial state. Following the strategy of \Cref{thm:generalized QND} proof, Lemma~\ref{lem:L subadditive}, the fact there exist $c>0$ such that $c^{-1}\pp_\beta\leq \pp_{\rho^{\rm ch},\beta}\leq c \pp_\beta$ for any $\beta\in \mathcal S$ and Kingmann's sub-additive ergodic theorem leads to 
\begin{align*}
\lim_{n\to\infty}\tfrac1n\ln Q_n(\alpha)=&\lim_{n\to\infty} \tfrac1n\ln\frac{Q_n(\alpha)}{Q_n(\Gamma)}\\
=&\lim_{n\to\infty}\tfrac1n(L_n(\alpha)-L_n(\Gamma))\\
=&-\sup_n \frac{S(\pp_{\Gamma}|_{\mathcal F_n}|\pp_{\alpha}|_{\mathcal F_n})-C}{n}\\
=&-s(\Gamma|\alpha),
\end{align*}
$\pp_{\rho^{\rm ch}}$-almost surely for some $C>0$.

The rate $s(\gamma|\alpha)$ cannot be made more explicit in general as it comes from an application of Fekete's sub-additive lemma. Nevertheless, in the QND case, it can be made explicit since the measures $\pp_\beta$ are laws of i.i.d. random variables. It is one of the results of \cite{BenrardBauer,BenoistAHP}. In this case,  
$$
s(\gamma \vert \alpha) = \sum_{a\in \mathcal A} \vert K_{\gamma,a} \vert^2 \log \frac{\vert K_{\gamma,a} \vert^2}{\vert K_{\alpha,a} \vert^2},
$$  
which is the Kullback-Leibler divergence of $(|K_{\gamma,a}|^2)_{a\in \mathcal A}$ with respect to $(|K_{\alpha,a}|^2)_{a\in \mathcal A}$. 

Similarly, we can give explicit constant and rate in \Cref{thm:meanconv}. Indeed, one can chose $N=1$ in \Cref{Prop:identif}. Then \Cref{lem:kappa} implies
$$
\kappa = \max_{\alpha \neq \beta} \sum_{a\in \mathcal A} \sqrt{\vert K_{\alpha,a} \vert^2 \vert K_{\beta,a} \vert^2}.
$$
Following a remark at the end of \Cref{thm:meanconv} proof, one can set $\tau=1$ in that same theorem.

Note that the condition $\mathbb{P}_\alpha \neq \mathbb{P}_{\beta}$ is equivalent to the fact that for all $\alpha \neq \beta$, there exists $a\in \mathcal A$ such that $\vert K_{\alpha,a} \vert^2 \neq \vert K_{\beta,a} \vert^2$. This condition ensures that $s(\beta \vert \alpha) > 0$ for $\beta \neq \alpha$ and $0 \leq \kappa < 1$.

\subsection{Introducing a transient subspace.} 
Assume now that $\mathcal{T} \neq \{0\}$, so that $\mathcal H=\left(\oplus_{i=1}^m \mathcal H_i\right)\oplus \mathcal T$ and the Kraus operators are block matrices of the form  
$$  
K_a = \left( \begin{array}{cccc}  
  K_{1,a}   & 0 & \cdots & \star \\  
   0  & \ddots & \ddots & \vdots \\  
   \vdots & \ddots & K_{m,a} & \star \\  
   0 & \cdots & 0 & K_{\mathcal{T},a}  
\end{array} \right).  
$$  
The stars are non-zero matrices such that there does not exist a non zero invariant subspace contained in $\mathcal T$. 
The matrices $K_{\mathcal T,a}$ are of size $\dim \mathcal{T} \times \dim \mathcal{T}$. 
As in the previous example we assume that the quantum channels $\Phi_i$ defined on $\mathcal B(\mathcal H_i)$ by the Kraus operators $(K_{i,a})_{a\in \mathcal A}$ are all irreducible. We denote their invariant states $\varrho_i$. We considered them as elements of $\mathcal D(\mathcal H_i)$ or $\mathcal D(\mathcal H)$ supported on $\mathcal H_i$ indiscriminately. 

\medskip
If we assume $\pp_{\varrho_i}\neq\pp_{\varrho_j}$ for any $i\neq j$, the sectors are still given by  
$$  
\mathcal{K}_{\alpha} = \textrm{supp}\, \varrho_\alpha=\mathcal H_\alpha, \quad \alpha \in \mathcal{S},  
$$  
where $\mathcal{S} = \{1, \ldots, m\}$.

On the contrary, if there exist a phase $z\in U(1)$ and a unitary operator $U\in U(\mathcal H)$ such that $K_{2,a}=z UK_{1,a}U^*$, then $\pp_1=\pp_2$. Moreover, assume that for any $i\neq j$ such that either $i$ or $j$ is not in $\{1,2\}$, $\pp_i\neq\pp_j$. Then $\mathcal S=\{1,3,\dotsc, m\}$, the first sector is
$$\mathcal K_{1}=\mathcal H_1\oplus \mathcal H_2,$$
and all the other sectors are unchanged,
$$\mathcal K_{\alpha}=\mathcal H_{\alpha},\quad \forall \alpha>2.$$

\subsection{Quantum non-demolition with a transient space}\label{sec:xpl numerics}
In the last example we choose $m=2$ and $\dim\mathcal H_1=\dim\mathcal H_2=\dim\mathcal T=1$.

This example in dimension $3$ allows us to compare our rate of selection $s(\gamma|\alpha)$ from \Cref{thm:generalized QND} with known results and shows, through numerical simulations, that it can be smaller than the selection rate for non-demolition measurement and the escape rate from the transient subspace. 

Let $\mathcal H=\cc^3$, $\mathcal A=\{0,1\}$, $(p,q)\in \{(x,y)\in (0,1)^2 : 0<p+q<1, p\neq q\}$, $r=1-p-q$,
$$K_0=\begin{pmatrix}
    \sqrt{\frac{q}{p+q}}&0& \sqrt{p/2}\\
    0 & \sqrt{\frac{p}{p+q}}& \sqrt{q/2}\\
    0 & 0 & \sqrt{r/2}
\end{pmatrix},\quad K_1=\begin{pmatrix}
    -\sqrt{\frac{p}{p+q}}&0& \sqrt{q/2}\\
    0 & -\sqrt{\frac{q}{p+q}}& \sqrt{p/2}\\
    0 & 0 & \sqrt{r/2}
\end{pmatrix}.$$
The assumptions on $p$ and $q$ imply the sectors are given by $\mathcal S=\{1,2\}$ with invariant states 
$$\varrho_1=\begin{pmatrix}
    1&0&0\\
    0&0&0\\
    0&0&0
\end{pmatrix}\quad\mbox{and}\quad\varrho_2=\begin{pmatrix}
    0&0&0\\
    0&1&0\\
    0&0&0
\end{pmatrix}.$$
The associated effect operators are,
$$E_1=\begin{pmatrix}
    1&0&0\\
    0&0&0\\
    0&0&\frac{q}{p+q}
\end{pmatrix}\quad\mbox{and}\quad E_2=\begin{pmatrix}
    0&0&0\\
    0&1&0\\
    0&0&\frac{p}{p+q}
\end{pmatrix}.$$
Our goal is to provide estimates on the rate $s(1|2)$ from \Cref{thm:generalized QND} in that case. From the proof of \Cref{thm:generalized QND} it is given by 
$$s(1|2)=-\limsup_{n\to\infty}\tfrac{1}{n}\ln \frac{\pp_{\rho^{\rm ch},2}(\omega_1,\dotsc,\omega_n)}{\pp_1(\omega_1,\dotsc,\omega_n)}$$
with $(\omega_n)_{n\in \nn}$ distributed according to $\pp_1$. From the expression of $E_2$ and the definition of $\pp_{\rho^{\rm ch},2}$,
$$\pp_{\rho^{\rm ch},2}(\omega_1,\dotsc,\omega_n)=\tfrac{1}{1+\frac{p}{p+q}}\left(\pp_{2}(\omega_1,\dotsc,\omega_n)+\pp_{\varrho_3,2}(\omega_1,\dotsc,\omega_n)\right)$$
with $\varrho_3=\begin{pmatrix}
    0&0&0\\
    0&0&0\\
    0&0&1
\end{pmatrix}$. Again from the expression of $E_2$,
$$\pp_{\varrho_3,2}(\omega_1,\dotsc,\omega_n)=\frac{p}{p+q}(r/2)^n +|\langle e_2,K_{\omega_n}\dotsb K_{\omega_1}e_3\rangle|^2$$
with $\{e_1,e_2,e_3\}$ the canonical basis of $\cc^3$.

Let
$$h(1|2)=-\limsup_{n\to\infty}\tfrac1n\ln\frac{\pp_2(\omega_1,\dotsc,\omega_n)}{\pp_1(\omega_1,\dotsc,\omega_n)},$$
$$h(1)=-\limsup_{n\to\infty}\tfrac1n\ln\pp_1(\omega_1,\dotsc,\omega_n)$$
and
$$\tau(1|2,3)=-\limsup_{n\to\infty}\tfrac1n\ln\frac{|\langle e_2,K_{\omega_n}\dotsb K_{\omega_1}e_3\rangle|^2}{\pp_1(\omega_1,\dotsc,\omega_n)}.$$
Then,
$$s(1|2)=\min( h(1|2), -\log(r/2)-h(1),\tau(1|2,3)).$$
Since $\pp_1$ and $\pp_2$ are laws of sequences of i.i.d. random variables, the first two limit superior are limits and
$$h(1|2)=\tfrac{p}{p+q}\ln(p/q)+\tfrac{q}{p+q}\ln(q/p)\quad \mbox{and}\quad h(1)=-\tfrac{p}{p+q}\ln(\tfrac{p}{p+q})-\tfrac{q}{p+q}\ln(\tfrac{q}{p+q})$$
almost surely with respect to $\pp_1$.

Remark that $h(1|2)$ is a Kullback-Leibler divergence whereas $h(1)$ is a Shannon entropy.
Depending on the value of $p$ and $q$, both alternatives $h(1|2)\leq -\log(r/2)-h(1)$ and $h(1|2)\geq -\log(r/2)-h(1)$ are possible. One question is wether $s(1|2)=\tau(1|2,3)<\min(h(1|2),-\log(r/2)-h(1))$ is possible. Since $\tau(1|2,3)$ is not easily computable, we provide some numerical results in \Cref{fig:numerics}. There, one can see that this eventuality is possible, which hints that $s(\gamma|\alpha)$ is a priory hard to compute in full generality. However it is relatively easy to estimate numerically by simulating the appropriate process.

\begin{figure}[h!]
    \includegraphics[width=.45\linewidth]{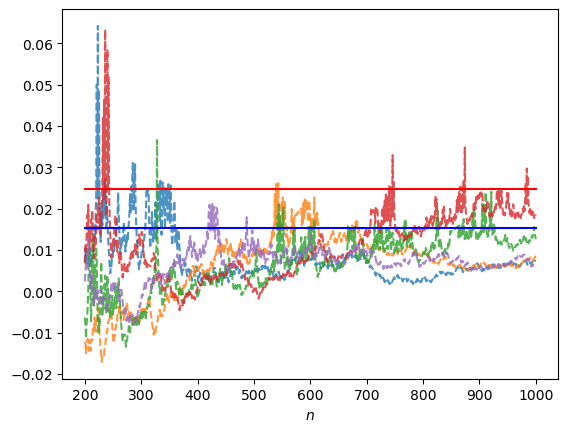}
    \includegraphics[width=.45\linewidth]{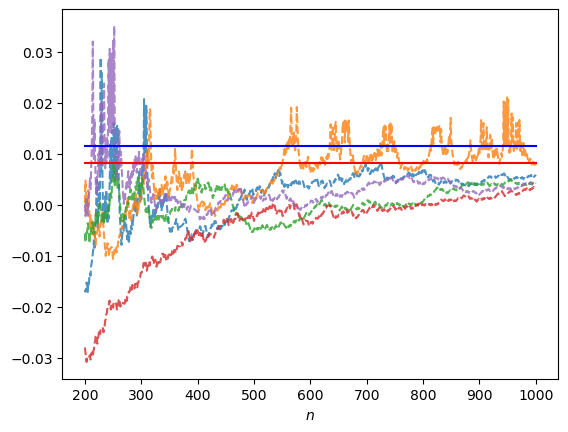}
    \caption{\label{fig:numerics} Numerical simulations for the example of \Cref{sec:xpl numerics}. Solid blue lines represent the value of $-\ln(r/2)-h(1)$. Solid red lines represent the value of $h(1|2)$. Dashed lines represent numerical computations of $\tfrac{1}{n}\ln\frac{|\langle e_2,K_{\omega_n}\dotsb K_{\omega_1}e_3\rangle|^2}{\pp_1(\omega_1,\dotsc,\omega_n)}$ for $200\leq n \leq 1000$. In the left panel $p=0.005$ and $q=0.004$. In the right panel $p=0.005$ and $q=0.0044$. In both cases we remark that $\tau(1|2,3)$ seems to be the smallest rate of convergence to $0$.}
\end{figure}

\section{Deformed instruments}\label{sec:deformed instruments}
Before we prove the main results, let us introduce the main new objects we define. For any $\alpha\in\mathcal S$, $\mathcal J^{(\alpha)}$ is an instrument on $\mathcal H_\alpha=E_\alpha\mathcal H=\mathcal K_\alpha\oplus E_\alpha \mathcal T$. 
\begin{definition}
    For any $a\in \mathcal A$, let $\Phi_a^{(\alpha)}:\mathcal B(\mathcal H_\alpha)\to \mathcal B(\mathcal H_\alpha)$ be defined by\footnote{We omit the canonical injection $\mathcal B(\mathcal H_\alpha)\to \mathcal B(\mathcal H)$ using $\mathcal B(\mathcal H_\alpha)\equiv E_\alpha\mathcal B(\mathcal H)E_\alpha$.}
    $$\Phi_a^{(\alpha)}(X)=E_\alpha^{-\frac12}\Phi_a(E_\alpha^{\frac12}XE_\alpha^{\frac12})E_\alpha^{-\frac12}$$
    with $E_\alpha^{-1}$ being the Penrose pseudo inverse of $E_\alpha$.
\end{definition}

\begin{proposition}\label{prop:deformed instrument}
    The indexed set $\{\Phi^{(\alpha)}_a\}_{a\in \mathcal A}$ is an instrument on $\mathcal B(\mathcal H_\alpha)$ with associated quantum channel $\Phi^{(\alpha)}$.
    Moreover, the set of fixed points of ${\Phi^{(\alpha)}}^*$ is $E_\alpha\mathcal F_{\Phi^*} E_\alpha$.
\end{proposition}
\begin{proof}
    The map $\Phi^{(\alpha)}$ is CP by construction, thus to prove it is a quantum channel it remains to prove it preserves the identity on $\mathcal{H}_\alpha$:
    \begin{align*}
        \Phi^{(\alpha)}(\id_{\mathcal{H}_\alpha}) =\sum_{a\in \mathcal{A}}\Phi_a^{(\alpha)}(\id_{\mathcal{H}_\alpha})
        &= \sum_{a\in \mathcal{A}} E_\alpha^{-\frac12}\Phi_a(E_\alpha^{\frac12}\id_{\mathcal{H}_\alpha}E_\alpha^{\frac12})E_\alpha^{-\frac12} \\
        &= E_\alpha^{-\frac12} \sum_{a\in \mathcal{A}} \Phi_a(E_\alpha) E_\alpha^{-\frac12} \\
        &= E_\alpha^{-\frac12}  \Phi(E_\alpha) E_\alpha^{-\frac12} = \id_{\mathcal{H}_\alpha}, \\
    \end{align*}
    where we used that $E_\alpha$ is and invariant effect of the quantum channel $\Phi$. 
    
    It remains to derive the set of fixed points of ${\Phi^{(\alpha)}}^*$. Since $E_\alpha=P_\alpha+T_\alpha$ with $P_\alpha$ the orthogonal projector onto $\mathcal K_\alpha$ and $\operatorname{range} T_\alpha\subset \mathcal{T}$ as proved in \Cref{sec:proof sectors}, 
    $E_\alpha\mathcal F_{\Phi^*}E_\alpha=P_\alpha\mathcal F_{\Phi^*}P_\alpha$. Since $\mathcal K_\alpha$ is an enclosure, \cite[Proposition~5.4]{carbone2016irreducible} implies $P_\alpha\mathcal F_{\Phi^*}P_\alpha$ is a set of fixed points of $\Phi^*$.

    Let $x\in \mathcal F_{\Phi^*}$, then $\mathcal T\subset \ker x\cap \ker x^*$, thus $P_\alpha x P_\alpha=E_\alpha^{-\frac12}x E_\alpha^{-\frac 12}$. Thus $\Phi^*(E_\alpha^{-\frac12}x E_\alpha^{-\frac 12})=E_\alpha^{-\frac12}x E_\alpha^{-\frac 12}$. 
    Hence, $x$ is a fixed point of ${\Phi^{(\alpha)}}^*$.

    Assume $x$ is a fixed point of ${\Phi^{(\alpha)}}^*$. Then, by definition of $\Phi^{(\alpha)}$, there exist $y\in \mathcal B(\mathcal H_\alpha)$ such that
    $x=E_\alpha^{\frac12}y E_\alpha^{\frac12}$ and $y$ is a fixed point of $\Phi^*$. Hence, $x\in E_\alpha^{\frac12}\mathcal F_{\Phi^*}E_\alpha^{\frac12}$. Then the equality $E_\alpha^{\frac12}\mathcal F_{\Phi^*}E_\alpha^{\frac12}=P_\alpha\mathcal F_{\Phi^*}P_\alpha=E_\alpha\mathcal F_{\Phi^*}E_\alpha$ yields the proposition.
\end{proof}

Next lemma expresses that each measure $\pp_{\rho,\alpha}$ can be expressed using the deformed instrument $\mathcal J^{(\alpha)}$.
\begin{lemma}\label{lem:law deformed instrument}
    For any $\rho\in \mathcal D(\mathcal H)$ and $\alpha\in\mathcal S$ such that $\tr(E_\alpha\rho)>0$, on $(\Omega,\mathcal F)$ we have
    $$\pp_{\rho,\alpha}({{\mathbf a}})=\tr(\rho^{(\alpha)} \Phi_{a_1}^{(\alpha)}\circ\dotsb\circ\Phi_{a_p}^{(\alpha)}(\id_{\mathcal H_\alpha}))$$
    for any $\mathbf{a}=(a_1,\ldots,a_p)\in \Omega_{\rm fin.}$ with 
    \begin{equation}
        \rho^{(\alpha)}=\frac{E_\alpha^{\frac12}\rho E_\alpha^{\frac12}}{\tr(E_\alpha\rho)}.
        \label{eq:state_i}
    \end{equation}
\end{lemma}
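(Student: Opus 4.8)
The plan is to reduce the identity to the single algebraic relation
$$E_i^{\frac12}\bigl(\Phi^{(i)}_{a_1}\circ\dotsb\circ\Phi^{(i)}_{a_p}(\id_{\mathcal H_i})\bigr)E_i^{\frac12}=\Phi_{a_1}\circ\dotsb\circ\Phi_{a_p}(E_i),$$
from which the lemma follows at once: substituting $\rho^{(i)}=E_i^{\frac12}\rho E_i^{\frac12}/\tr(E_i\rho)$ and using cyclicity of the trace gives $\tr(\rho^{(i)}\,\Phi^{(i)}_{a_1}\circ\dotsb\circ\Phi^{(i)}_{a_p}(\id_{\mathcal H_i}))=\tfrac{1}{\tr(E_i\rho)}\tr(\rho\,\Phi_{a_1}\circ\dotsb\circ\Phi_{a_p}(E_i))=\pp_{\rho,i}(\mathbf a)$, which is the definition of $\pp_{\rho,i}$. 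So the whole proof concentrates on the displayed algebraic relation, which I would prove by a downward induction on the index of the first map.

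Write $\Pi_i$ for the orthogonal projector onto $\mathcal H_i=\supp E_i$, so that the Penrose pseudo-inverse satisfies $E_i^{\frac12}E_i^{-\frac12}=E_i^{-\frac12}E_i^{\frac12}=\Pi_i=\id_{\mathcal H_i}$ and $E_i^{\frac12}\id_{\mathcal H_i}E_i^{\frac12}=E_i$. When one expands $\Phi^{(i)}_{a_k}(Y)=E_i^{-\frac12}\Phi_{a_k}(E_i^{\frac12}YE_i^{\frac12})E_i^{-\frac12}$ inside the composition, each application produces a compression by $\Pi_i$, so that telescoping naively leaves spurious projectors $\Pi_i(\cdot)\Pi_i$ sandwiching the intermediate images. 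The crux of the argument, and the one genuinely non-trivial point, is to show that these projectors act trivially. This is where I expect the main obstacle to lie, and I would resolve it with the following domination lemma: for every $1\le k\le p$ the partial image $W_k:=\Phi_{a_k}\circ\dotsb\circ\Phi_{a_p}(E_i)$ satisfies $0\le W_k\le E_i$. Indeed, since $\sum_{a}\Phi_a(E_i)=\Phi(E_i)=E_i$ and each term is positive (the $\Phi_a$ are CP), one has $\Phi_{a}(E_i)\le E_i$ for every $a$; combining this with the monotonicity of the positive maps $\Phi_a$ gives $W_k=\Phi_{a_k}(W_{k+1})\le\Phi_{a_k}(E_i)\le E_i$ by induction. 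From $0\le W_k\le E_i$ one gets $\ker E_i\subseteq\ker W_k$, hence $\supp W_k\subseteq\supp E_i=\mathcal H_i$ and therefore $\Pi_i W_k\Pi_i=W_k$.

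With this lemma in hand the induction is immediate. Setting $Y_k:=\Phi^{(i)}_{a_k}\circ\dotsb\circ\Phi^{(i)}_{a_p}(\id_{\mathcal H_i})\in\mathcal B(\mathcal H_i)$, I claim $E_i^{\frac12}Y_kE_i^{\frac12}=W_k$ for all $k$. The base case $k=p$ reads $E_i^{\frac12}Y_pE_i^{\frac12}=\Pi_i\Phi_{a_p}(E_i)\Pi_i=\Phi_{a_p}(E_i)=W_p$, using $E_i^{\frac12}\id_{\mathcal H_i}E_i^{\frac12}=E_i$ and $\Pi_iW_p\Pi_i=W_p$. For the inductive step, the hypothesis $E_i^{\frac12}Y_{k+1}E_i^{\frac12}=W_{k+1}$ together with $Y_{k+1}\in\mathcal B(\mathcal H_i)$ gives $Y_{k+1}=E_i^{-\frac12}W_{k+1}E_i^{-\frac12}$, so that $Y_k=\Phi^{(i)}_{a_k}(Y_{k+1})=E_i^{-\frac12}\Phi_{a_k}(W_{k+1})E_i^{-\frac12}=E_i^{-\frac12}W_kE_i^{-\frac12}$ and hence $E_i^{\frac12}Y_kE_i^{\frac12}=\Pi_iW_k\Pi_i=W_k$, where the last equality again uses the support containment. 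Taking $k=1$ yields $E_i^{\frac12}Y_1E_i^{\frac12}=\Phi_{a_1}\circ\dotsb\circ\Phi_{a_p}(E_i)$, i.e. the relation reduced to in the first paragraph, completing the proof.
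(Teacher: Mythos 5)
Your proof is correct and follows essentially the same route as the paper's: both reduce the lemma to the identity $E_i^{\frac12}\Phi^{(i)}_{a_1}\circ\dotsb\circ\Phi^{(i)}_{a_p}(\id_{\mathcal H_i})E_i^{\frac12}=\Phi_{a_1}\circ\dotsb\circ\Phi_{a_p}(E_i)$ and justify the cancellation of the intermediate $E_i^{-\frac12}E_i^{\frac12}$ pairs by the support containment $\supp\Phi_{a_k}\circ\dotsb\circ\Phi_{a_p}(E_i)\subseteq\mathcal H_i$, which the paper asserts from $\Phi$-invariance of $E_i$ and which you derive explicitly via the domination $0\le W_k\le E_i$. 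Your version simply makes the telescoping rigorous with an induction where the paper performs it in one line.
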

\begin{proof}
   From the explicit expression of $\Phi_a^{(\alpha)}$ and $\rho^{(\alpha)}$ we obtain
   \begin{align*}
       \tr(\rho^{(\alpha)} \Phi_{a_1}^{(\alpha)}\circ\dotsb\circ\Phi_{a_p}^{(\alpha)}(\id_{\mathcal H_\alpha})) &= \frac{\tr(E_\alpha^{\frac12}\rho E_\alpha^{\frac12} E_\alpha^{-\frac12} \Phi_{a_1}\circ\dotsb\circ\Phi_{a_p}(E_\alpha^{\frac12} \id_{\mathcal H_\alpha} E_\alpha^{\frac12} )E_\alpha^{-\frac12}) }{\tr(E_\alpha\rho)} \\
       &= \frac{\tr(\rho \id_{\mathcal{H}_\alpha} \Phi_{a_1}\circ\dotsb\circ\Phi_{a_p}(E_\alpha)\id_{\mathcal{H}_\alpha}) }{\tr(E_\alpha\rho)} \\
       &= \frac{\tr(\rho  \Phi_{a_1}\circ\dotsb\circ\Phi_{a_p}(E_\alpha))}{\tr(E_\alpha\rho)} = \pp_{\rho,\alpha}({\mathbf a})
   \end{align*} 
   where we have used the fact that $\supp \Phi_{a_1}\circ\dotsb\circ\Phi_{a_p}(E_\alpha) \subseteq \mathcal{H}_\alpha$ since $E_\alpha$ is $\Phi$-invariant.
\end{proof}
We will use this expression for the laws $\pp_{\rho,\alpha}$ in our subsequent proofs of the main results.

\section{Exponentially fast selection of a sector: \Cref{thm:generalized QND} and \Cref{cor:exp to subspace} proofs}\label{sec:as}

We introduce a decomposition of $\pp_\rho$ relating it to the partition $\{\Omega_\alpha\}_{\alpha\in \mathcal S}$ and the conditionned measures $\pp_{\rho,\alpha}$.
\begin{lemma}\label{lemma:sectO}
    \label{lem:Omega_alpha proba 1}
    For any $\rho\in \mathcal D(\mathcal H)$ and $\alpha,\beta\in \mathcal S$ such that $\tr(\rho E_\alpha)>0$, $\pp_{\rho,\alpha}(\Omega_\beta)=\delta_{\alpha,\beta}$. Moreover,
    $$\d \pp_{\rho,\alpha}=\frac{\one_{\Omega_\alpha}}{Q_0(\alpha)}\d\pp_\rho,$$
    meaning that for any $A\in \mathcal F$,
    $$\pp_{\rho,\alpha}(A)=\frac{\pp_\rho(A\cap \Omega_\alpha)}{Q_0(\alpha)}.$$
\end{lemma}
\begin{proof} 
    Since $\{E_\alpha\}_{\alpha\in \mathcal S}$ is a POVM, $Q_0(\alpha)=\tr(E_\alpha\rho)$ and $\pp_{\rho}=\sum_{\alpha\in \mathcal S}\tr(E_\alpha \rho)\pp_{\rho,\alpha}$,
    it is sufficient to prove $\pp_{\rho,\alpha}(\Omega_\beta)=\delta_{\alpha,\beta}$. Indeed, then $\pp_{\rho,\alpha}(A)=\pp_{\rho,\alpha}(A\cap \Omega_\alpha)$ and $\pp_{\rho,\beta}(A\cap\Omega_\alpha)=0$ for any $\beta\neq \alpha$. Thus,
    $$\pp_{\rho}(A\cap \Omega_\alpha)=\sum_{\beta\in \mathcal S}Q_0(\beta)\pp_{\rho,\beta}(A\cap \Omega_\alpha)=Q_0(\alpha)\pp_{\rho,\alpha}(A).$$

    Recall that $\Omega_\beta$ is $\theta$-invariant. Hence, \Cref{lem:pp Cesaro convergence,lem:law deformed instrument,prop:deformed instrument} yield,
    $$\pp_{\rho,\alpha}(\Omega_\beta)=\pp_{T_{\infty}(\rho),\alpha}(\Omega_\beta).$$
    Then, \Cref{thm:sectors POVM} Item~(2) yields $\pp_{T_\infty(\rho),\alpha}=\pp_{\alpha}$. Finally, \Cref{thm:sectors ergodic} yields the lemma.
\end{proof}
We now prove the first part of the theorem.
\begin{lemma}\label{lem:conv Q}
    For any $\rho\in \mathcal D(\mathcal H)$, the limits
    $$Q_\infty(\alpha)=\lim_{n\to\infty} Q_n(\alpha)\quad \mbox{and}\quad \widehat{Q}_\infty(\alpha)=\lim_{n\to\infty}\widehat{Q}_n(\alpha)$$
    exist $\pp_\rho$ almost surely and
    $$Q_\infty(\alpha)=\widehat{Q}_\infty(\alpha)=\one_{\Omega_\alpha}, \quad \pp_\rho\as$$
    with $\pp_{\rho}(Q_\infty(\alpha)=1)=\pp_\rho(\Gamma=\alpha)=\pp_\rho(\Omega_\alpha)=Q_0(\alpha)=\tr(E_\alpha\rho)$.
\end{lemma}
\begin{proof}
    By definition,
    $$\left.\d\pp_{\rho,\alpha}\right|_{\mathcal F_n}=\frac{Q_n(\alpha)}{Q_0(\alpha)}\left.\d\pp_\rho\right|_{\mathcal F_n}.$$
    Hence, $Q_n(\alpha)=Q_0(\alpha)\left.\frac{\d\pp_{\rho,\alpha}}{\d\pp_{\rho}}\right|_{\mathcal F_n}$ and $(Q_n(\alpha))_n$ is a bounded martingale. It therefore converges $\pp_\rho$-almost surely and in $L^1(\pp_\rho)$-norm to $Q_\infty(\alpha)$ by Doob's martingale convergence theorem. 

    From the $L^1$ convergence, we deduce that $(Q_n(\alpha))$ is a closed martingale and then for all $n$, we have
    $$Q_n(\alpha)=\mathbb E_{\rho}[Q_\infty(\alpha)\vert\mathcal F_n].$$
    This allows to deduce that
    $$\d\pp_{\rho,\alpha}=\frac{Q_\infty(\alpha)}{Q_0(\alpha)}\d\pp_\rho.$$
    Following \Cref{lem:Omega_alpha proba 1}, $\d\pp_{\rho,\alpha}=\frac{\one_{\Omega_\alpha}}{Q_0(\alpha)}\d\pp_\rho$. Uniqueness of Radon-Nikodym derivative implies $Q_{\infty}(\alpha)=\one_{\Omega_\alpha}$ $\pp_\rho$-almost surely.
    
    Now, we also have $\widehat{Q}_\infty(\alpha):=\lim_{n\to\infty}\widehat{Q}_n(\alpha)=\one_{\Omega_\alpha}$ $\pp_{\hat \rho}$-almost surely by the same argumentation. 
    Since there exist $c>0$ such that $\rho\leq c\hat \rho$, $\pp_\rho \leq c\pp_{\hat \rho}$ by posititivity of the linear extension of $\rho\mapsto \pp_\rho$. Then, the convergence and the equality hold also $\pp_\rho$-almost surely.
    
    It remains to prove that $\pp_\rho(Q_\infty(\alpha)=1)=Q_0(\alpha)$. That is a direct consequence of $\pp_\rho=\sum_{\beta\in \mathcal S}Q_0(\beta)\pp_{\rho,\beta}$ and \Cref{lem:Omega_alpha proba 1}. This concludes the proof of the lemma.
\end{proof}

We now focus on the proof that the convergence is exponentially fast.
The law of outcomes with respect to an invariant initial state is always a convex combination of the probability measures defining the sectors.
\begin{lemma}
    \label{lem:conv combination of sectors}
    Assume $\rho\in \mathcal D_{\Phi^*}$, then
    $$\pp_\rho=\sum_{\alpha\in \mathcal S}\tr(\rho E_\alpha)\pp_\alpha.$$
\end{lemma}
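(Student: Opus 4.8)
The plan is to exploit the fact that an invariant initial state makes the outcome law stationary under the left shift $\theta$, and then to identify, via Birkhoff's pointwise ergodic theorem, the restriction of $\pp_\rho$ to each of the shift-invariant events $\Omega_\alpha$ with the sector law $\pp_\alpha$.

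First I would record that $\pp_\rho$ is $\theta$-invariant. In the proof of \Cref{lem:pp Cesaro convergence} it is shown that $\pp_\rho\circ\theta^{-1}=\pp_{\Phi^*(\rho)}$; since $\Phi^*(\rho)=\rho$ this gives $\pp_\rho\circ\theta^{-1}=\pp_\rho$, so $\pp_\rho$ is stationary. Next, from \Cref{lem:Omega_alpha proba 1}, namely $\d\pp_\rho=\sum_{\beta}Q_0(\beta)\one_{\Omega_\beta}\d\pp_{\rho,\beta}$ together with $\pp_{\rho,\beta}(\Omega_\alpha)=\delta_{\alpha\beta}$, I obtain $\pp_\rho(\Omega_\alpha)=Q_0(\alpha)=\tr(\rho E_\alpha)$, and since $\sum_\alpha\tr(\rho E_\alpha)=1$ the family $\{\Omega_\alpha\}_{\alpha\in\mathcal P}$ partitions $\Omega$ up to a $\pp_\rho$-null set. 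It therefore suffices to prove that, for every $\alpha$ with $\tr(\rho E_\alpha)>0$, the conditional measure $\mu_\alpha:=\pp_\rho(\,\cdot\mid\Omega_\alpha)$ equals $\pp_\alpha$: summing the resulting identity $\pp_\rho=\sum_\alpha\tr(\rho E_\alpha)\mu_\alpha$ then yields the claim.

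To identify $\mu_\alpha$ I would argue as follows. Each $\Omega_\alpha$ is $\theta$-invariant, since the limit $\lim_n \tfrac1n N_n(\mathbf a)$ is unchanged by dropping the first symbol (the two counting functions differ by at most $p=|\mathbf a|$ terms). Consequently $\mu_\alpha$ is again $\theta$-invariant, and $\mu_\alpha(\Omega_\alpha)=1$ by construction. Fix $\mathbf a\in\Omega_{\rm fin.}$ and write $N_n(\mathbf a)=\sum_{k=1}^{n-p+1}\one_{C_{\mathbf a}}\circ\theta^{k-1}$. Birkhoff's theorem applied to the stationary system $(\Omega,\mathcal F,\mu_\alpha,\theta)$ gives that $\tfrac1n N_n(\mathbf a)$ converges $\mu_\alpha\as$ to $\ee_{\mu_\alpha}[\one_{C_{\mathbf a}}\mid\mathcal I]$, where $\mathcal I$ is the invariant $\sigma$-algebra, and that the integral of this limit equals $\mu_\alpha(C_{\mathbf a})$. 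But $\mu_\alpha$ is carried by $\Omega_\alpha$, on which $\tfrac1n N_n(\mathbf a)\to\pp_\alpha(C_{\mathbf a})$ by the very definition of $\Omega_\alpha$; hence the Birkhoff limit is the constant $\pp_\alpha(C_{\mathbf a})$, and integrating yields $\mu_\alpha(C_{\mathbf a})=\pp_\alpha(C_{\mathbf a})$. As this holds for every cylinder, and cylinders form a $\pi$-system generating $\mathcal F$, we conclude $\mu_\alpha=\pp_\alpha$.

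I expect the delicate point to be this identification step rather than the stationarity bookkeeping: one must pass from equality of the pointwise frequency limits (which a priori only records the numerical values $\pp_\alpha(C_{\mathbf a})$) to equality of the full measures, and this is exactly what the combination of stationarity of $\mu_\alpha$ and Birkhoff's theorem provides. I would emphasize that this route deliberately avoids the tempting per-index identity $\pp_{\rho,i}=\pp_{\rho_i}$, which fails in general because the deformed state $\rho^{(i)}=E_i^{\frac12}\rho E_i^{\frac12}/\tr(E_i\rho)$ need not be ${\Phi^{(i)}}^*$-invariant unless $\supp\rho\subseteq\supp E_i$; working at the level of the shift-invariant events sidesteps this obstruction.
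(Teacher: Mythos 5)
Your proof is correct, but it takes a genuinely different route from the paper's. The paper keeps the decomposition $\pp_\rho=\sum_\alpha\tr(\rho E_\alpha)\pp_{\rho,\alpha}$ and proves the per-index identity $\pp_{\rho,i}=\pp_{\rho_i}$ directly: writing $\pp_{\rho,i}$ via the deformed instrument $\mathcal J^{(i)}$ (\Cref{lem:law deformed instrument}), observing that $\pp_{\rho,i}$ is itself shift-invariant, and letting the Ces\`aro convergence of \Cref{lem:pp Cesaro convergence}, applied to $\mathcal J^{(i)}$ with its unique invariant state $\rho_i$, force the identification. You instead condition $\pp_\rho$ on the strictly $\theta$-invariant events $\Omega_\alpha$, use \Cref{lem:Omega_alpha proba 1} to get $\pp_\rho(\Omega_\alpha)=\tr(\rho E_\alpha)$, and identify each conditional measure $\mu_\alpha$ with $\pp_\alpha$ by integrating the frequency limits against the stationary measure $\mu_\alpha$; all the steps check out (strict $\theta$-invariance of $\Omega_\alpha$, stationarity of $\mu_\alpha$, the $\pi$-system argument), and in fact you do not even need the full strength of Birkhoff --- bounded convergence plus $\ee_{\mu_\alpha}[\tfrac1nN_n(\mathbf a)]=\tfrac{n-p+1}{n}\mu_\alpha(C_{\mathbf a})$ already closes the identification. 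One caveat on your closing remark: the ``tempting'' identity $\pp_{\rho,i}=\pp_{\rho_i}$ is \emph{not} blocked under the lemma's hypothesis. Although $\rho^{(i)}$ need not be ${\Phi^{(i)}}^*$-invariant, the measure $\pp_{\rho,i}$ is shift-invariant whenever $\Phi^*(\rho)=\rho$ (compute $\pp_{\rho,i}\circ\theta^{-1}(C_{\mathbf a})=\tr(\Phi^*(\rho)\Phi_{\mathbf a}(E_i))/\tr(\rho E_i)$ and use $\Phi(E_i)=E_i$), and the Ces\`aro limit then pins it to $\pp_{\rho_i}$; this is precisely the paper's argument. What your route buys is independence from the deformed-instrument machinery at this stage (it only enters through \Cref{lem:Omega_alpha proba 1}); what the paper's route buys is the finer per-index statement $\pp_{\rho,i}=\pp_{\rho_i}$, which is stronger than the sector-level identity you establish.
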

\begin{proof}
    By definition of $\pp_{\rho,\alpha}$, $\pp_\rho=\sum_{\alpha\in \mathcal S}\tr(E_\alpha\rho)\pp_{\rho,\alpha}$.
    Then, \Cref{thm:sectors POVM} Item~(2) yields the lemma.
\end{proof}
We will use the following standard lemma about Radon-Nikodym derivatives. For the reader convenience we provide a short proof.
\begin{lemma}
    \label{lem:strictly positive Randon Nikodym}
    Let $\mu$ and $\nu$ be two probability measures. Assume $\nu\ll \mu$, then $\d\nu/\d\mu>0$, $\nu$-almost surely.
\end{lemma}
\begin{proof}
    Let $A=\{\omega: \d\nu/\d\mu(\omega)=0\}$. Assume $\nu(A)>0$. Then, $\nu(A)=\ee_\mu(\one_A\d\nu/\d\mu)=0$ which is a contradiction. Hence, $\d\nu/\d\mu>0$, $\nu$-almost surely.
\end{proof}
Recall that we denote the chaotic state by $\rho^{\rm ch}=\id_{\mathcal H}/\dim\mathcal H$. We shall use it as a reference state and show some absolute continuity properties on $\pp_{\rho^{\rm ch}}$.
\begin{lemma}
    \label{lem:absolute continuity pp pp1pp2}
    The limit
    $$\lim_{n\to\infty}\frac{\pp_{\rho^{\rm ch}}(\omega_1,\omega_2,\dotsc,\omega_n)}{\pp_{\rho^{\rm ch}}(\omega_2,\omega_3,\dotsc,\omega_n)}$$
    exists and is strictly positive $\pp_{\rho^{\rm ch}}$-almost surely.
\end{lemma}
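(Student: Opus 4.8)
The plan is to write the ratio as a product of two factors, each the restriction to a finite $\sigma$-algebra of a Radon–Nikodym density, and then combine Doob's martingale convergence theorem with \Cref{lem:strictly positive Randon Nikodym}. Write $\nu:=\pp_{\rho^{\rm ch}}$. Recall from the proof of \Cref{lem:pp Cesaro convergence} that $\nu\circ\theta^{-1}=\pp_{\Phi^*(\rho^{\rm ch})}$, so that $\pp_{\Phi^*(\rho^{\rm ch})}(C_{\omega_2,\dotsc,\omega_n})=\nu(\theta^{-1}C_{\omega_2,\dotsc,\omega_n})$. Inserting this quantity between numerator and denominator, I would factor
$$\frac{\nu(C_{\omega_1,\dotsc,\omega_n})}{\nu(C_{\omega_2,\dotsc,\omega_n})}=\underbrace{\frac{\nu(C_{\omega_1,\dotsc,\omega_n})}{\pp_{\Phi^*(\rho^{\rm ch})}(C_{\omega_2,\dotsc,\omega_n})}}_{=:A_n}\cdot\underbrace{\frac{\pp_{\Phi^*(\rho^{\rm ch})}(C_{\omega_2,\dotsc,\omega_n})}{\nu(C_{\omega_2,\dotsc,\omega_n})}}_{=:C_n}.$$
All denominators are positive $\nu$-a.s., since the set of $\omega$ admitting a prefix of zero $\nu$-probability is a countable union of null cylinders. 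The factor $A_n$ is exactly the conditional probability $\nu(\omega_1\mid\omega_2,\dotsc,\omega_n)$, while $C_n=\Psi_{n-1}\circ\theta$ where $\Psi_m:=\tfrac{\d\pp_{\Phi^*(\rho^{\rm ch})}}{\d\nu}\big|_{\mathcal F_m}$, because $C_{\omega_2,\dotsc,\omega_n}$ is the length-$(n-1)$ prefix cylinder of $\theta\omega$.

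For $A_n$: fixing $a\in\mathcal A$, the process $\nu(\{\omega_1=a\}\mid\mathcal G_n)$ with $\mathcal G_n:=\sigma(\omega_2,\dotsc,\omega_n)$ is a bounded martingale for the increasing filtration $(\mathcal G_n)$, hence converges $\nu$-a.s.\ by Doob's theorem to $\nu(\{\omega_1=a\}\mid\mathcal G_\infty)$. Summing against $\one_{\{\omega_1=a\}}$ gives $A_n\to A_\infty:=\nu(\omega_1\mid\mathcal G_\infty)$, $\nu$-a.s. Strict positivity of $A_\infty$ follows by the argument of \Cref{lem:strictly positive Randon Nikodym}: the event $\{\nu(\{\omega_1=a\}\mid\mathcal G_\infty)=0\}$ lies in $\mathcal G_\infty$, so $\nu(\{\omega_1=a\}\cap\{\nu(\{\omega_1=a\}\mid\mathcal G_\infty)=0\})=\ee_\nu[\one_{\{\nu(\cdots)=0\}}\,\nu(\{\omega_1=a\}\mid\mathcal G_\infty)]=0$; summing over $a$ yields $A_\infty>0$, $\nu$-a.s.

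For $C_n$: I first check $\pp_{\Phi^*(\rho^{\rm ch})}\ll\nu$. As $\Phi^*(\rho^{\rm ch})$ is a state, one has the operator bound $\Phi^*(\rho^{\rm ch})\leq\id_{\mathcal H}=\dim(\mathcal H)\,\rho^{\rm ch}$, and with the POVM $P$ of \Cref{lem:pp lipshitz} the map $\sigma\mapsto\pp_\sigma(\cdot)=\tr(\sigma P(\cdot))$ is monotone on positive operators, so $\pp_{\Phi^*(\rho^{\rm ch})}\leq\dim(\mathcal H)\,\nu$. Thus $(\Psi_m)$ is a martingale bounded by $\dim(\mathcal H)$ and converges $\nu$-a.s.\ to $\Psi:=\tfrac{\d\pp_{\Phi^*(\rho^{\rm ch})}}{\d\nu}$. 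Convergence $\nu$-a.s.\ of $C_n=\Psi_{n-1}\circ\theta$ to $\Psi\circ\theta$ then holds because the full-measure convergence set $G$ of $(\Psi_m)$ satisfies $\nu(\theta^{-1}G)=\nu\circ\theta^{-1}(G)=\pp_{\Phi^*(\rho^{\rm ch})}(G)=1$, using $\pp_{\Phi^*(\rho^{\rm ch})}\ll\nu$. The same change of measure through $\theta$ gives positivity: by \Cref{lem:strictly positive Randon Nikodym}, $\pp_{\Phi^*(\rho^{\rm ch})}(\{\Psi=0\})=0$, whence $\nu(\{\Psi\circ\theta=0\})=\nu\circ\theta^{-1}(\{\Psi=0\})=\pp_{\Phi^*(\rho^{\rm ch})}(\{\Psi=0\})=0$.

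Combining the two factors, the ratio converges $\nu$-a.s.\ to $A_\infty\,(\Psi\circ\theta)>0$, which establishes both existence and strict positivity. The delicate point is this last one: Radon–Nikodym densities are naturally strictly positive only almost surely with respect to the numerator measure $\pp_{\Phi^*(\rho^{\rm ch})}$, not with respect to $\nu$. The mechanism that transfers positivity back to $\nu$ is precisely that these densities are evaluated at $\theta\omega$, together with the quasi-invariance identity $\nu\circ\theta^{-1}=\pp_{\Phi^*(\rho^{\rm ch})}$; this is exactly where the choice of the full-rank reference state $\rho^{\rm ch}$, guaranteeing $\pp_{\Phi^*(\rho^{\rm ch})}\ll\nu$, is essential.
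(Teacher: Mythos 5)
Your proof is correct, and it reaches the conclusion by a different factorization than the paper's. The paper compares $\pp_{\rho^{\rm ch}}$ with the product measure $\mathbb Q=\pp^{(1)}\otimes\pp_{\rho^{\rm ch}}$ (first outcome independent of the shifted sequence), uses $\Phi_{b_1}^*(\rho^{\rm ch})/\tr(\Phi_{b_1}^*(\rho^{\rm ch}))\leq \dim\mathcal H\,\rho^{\rm ch}$ to get $\pp_{\rho^{\rm ch}}\leq \dim\mathcal H\,\mathbb Q$, and identifies the ratio, up to the $n$-independent positive factor $\pp_{\rho^{\rm ch}}(C_{\omega_1})$, with the single closed bounded martingale $\d\pp_{\rho^{\rm ch}}/\d\mathbb Q|_{\mathcal F_n}$; convergence and strict positivity then come in one stroke from Doob's theorem and \Cref{lem:strictly positive Randon Nikodym}. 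You instead insert $\pp_{\Phi^*(\rho^{\rm ch})}(C_{\omega_2,\dotsc,\omega_n})=\pp_{\rho^{\rm ch}}\circ\theta^{-1}(C_{\omega_2,\dotsc,\omega_n})$ and treat the two resulting factors separately: a L\'evy martingale for the conditional law of $\omega_1$ given $\sigma(\omega_2,\dotsc,\omega_n)$, and the density of $\pp_{\Phi^*(\rho^{\rm ch})}$ with respect to $\pp_{\rho^{\rm ch}}$ evaluated at $\theta\omega$. The two routes rest on the same ingredients --- full rank of $\rho^{\rm ch}$ yielding a domination with constant $\dim\mathcal H$, and strict positivity of Radon--Nikodym derivatives under the numerator measure --- and your product $A_nC_n$ equals the paper's $\pp_{\rho^{\rm ch}}(C_{\omega_1})M_n$, so the difference is in the bookkeeping. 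Your version costs two separate convergence arguments plus a transfer of null sets through $\theta$ (done correctly via $\pp_{\rho^{\rm ch}}\circ\theta^{-1}=\pp_{\Phi^*(\rho^{\rm ch})}\ll\pp_{\rho^{\rm ch}}$); it buys an explicit identification of the limit as $A_\infty\,(\Psi\circ\theta)$ and makes visible exactly where the shift identity $\pp_\rho\circ\theta^{-1}=\pp_{\Phi^*(\rho)}$ enters. One detail to tighten: the a.s.\ positivity of the denominators concerns prefix cylinders of $\theta\omega$, not of $\omega$, so it is not literally ``a countable union of null cylinders'' for $\pp_{\rho^{\rm ch}}$; it follows instead from the chain $\pp_{\rho^{\rm ch}}(C_{\omega_1,\dotsc,\omega_n})\leq\pp_{\Phi^*(\rho^{\rm ch})}(C_{\omega_2,\dotsc,\omega_n})\leq\dim\mathcal H\,\pp_{\rho^{\rm ch}}(C_{\omega_2,\dotsc,\omega_n})$ together with the standard statement for prefixes of $\omega$, which is available from what you already prove.
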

\begin{proof} 
    Consider $\pp_{\rho_{\rm ch}}$ and $\pp^{(1)}$ be the restriction of $\pp_{\rho_{\rm ch}}$ to $\mathcal F_1$. Then $\mathbb Q=\pp^{(1)}
    \otimes\pp_{\rho_{\rm ch}}$ defines a probability measure over $\Omega$ where $\omega_1$ is independent of $\theta(\omega)$ with 
    $\omega_1\sim \pp^{(1)}$ and $\theta(\omega)\sim\pp_{\rho_{\rm ch}}$. For any $\mathbf{b}\in\Omega_{\rm fin.}$, $\pp_{\rho_{\rm ch}}({\mathbf{b}})=\pp^{(1)}({b_1})\pp_{\rho_{\rm ch}}({\mathbf{b}}|C_{b_1})$. 
    Setting $\tilde{\mathbf{b}}=(b_2,\dotsc,b_n)$ for any 
    $\mathbf{b}=(b_1,\dotsc,b_n)\in \Omega_{\rm fin.}$ and setting $\rho_{b_1}=\Phi_{b_1}^*(\rho_{\rm ch})/\tr(\Phi_{b_1}^*(\rho_{\rm ch}))$, 
    by definition of $\pp_{\rho_{\rm ch}}$,
    $$\pp_{\rho_{\rm ch}}({\mathbf{b}})=\pp^{(1)}({b_1})\pp_{\rho_{b_1}}({\mathbf{\tilde b}}).$$
    Since $\rho_{b_1}\leq \dim\mathcal H\ \rho_{\rm ch}$, it follows from the positivity of $\varrho\mapsto \pp_\varrho$ that,
    $$\pp_{\rho_{\rm ch}}({\mathbf{b}})\leq \dim\mathcal H\ \mathbb Q({\mathbf{b}}).$$
    Hence, since $\mathbf{b}$ was arbitrary, $\pp_{\rho_{\rm ch}}\leq \dim\mathcal H\ \mathbb Q$ and in particular $\pp_{\rho_{\rm ch}}\ll \mathbb Q$. 
    
    The Radon-Nikodym derivative of $\pp_{\rho_{\rm ch}}$ with respect to $\mathbb Q$ restricted to $\mathcal F_n$ is given by:
    $$M_n:=\left.\frac{\d \pp_{\rho_{\rm ch}}}{\d \mathbb Q}\right|_{\mathcal F_n}=\frac{\pp_{\rho_{\rm ch}}({\omega_1,\dotsc,\omega_n})}{\pp_{\rho_{\rm ch}}({\omega_1})\pp_{\rho_{\rm ch}}({\omega_2,\dotsc,\omega_n})}.$$
    As a closed martingale $(M_n)_{n\in \nn}$ converges $\mathbb Q$-almost surely and therefore $\pp_{\rho_{\rm ch}}$-almost surely. In the denominator, $\pp^{(1)}(C_{\omega_1})$ is independent of $n$ and $\pp_{\rho_{\rm ch}}$-almost surely strictly positive. Hence, $(\pp_{\rho_{\rm ch}}(C_{\omega_1})M_n)_{n\in \nn}$ converges $\pp_{\rho_{\rm ch}}$-almost surely and \Cref{lem:strictly positive Randon Nikodym} yields the lemma.
\end{proof}

Before we prove \Cref{thm:generalized QND}, we show two sequences of random variables are sub additive.
\begin{lemma}\label{lem:L subadditive}
    For any $\alpha\in \mathcal S$, let 
    $$L_n(\alpha):\omega\mapsto\ln \pp_\alpha({\omega_1,\dotsc,\omega_n})\qquad\mbox{and}\qquad L_n^{\rm ch}(\alpha):\omega\mapsto\ln \pp_{\rho^{\rm ch},\alpha}({\omega_1,\dotsc,\omega_n}).$$
    Then there exists $C>0$ such that for any $n,m\in \nn$,
    $$L_{n+m}(\alpha)\leq C+ L_n(\alpha)+L_m(\alpha)\circ\theta^n\qquad\mbox{and}\qquad L_{n+m}^{\rm ch}(\alpha)\leq C+ L_n^{\rm ch}(\alpha)+L_m^{\rm ch}(\alpha)\circ\theta^n$$
\end{lemma}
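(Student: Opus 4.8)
The plan is to note that, after exponentiating, both claimed inequalities are of the sub-multiplicative form $\pp(C_{\mathbf{ab}})\le e^{C}\,\pp(C_{\mathbf a})\,\pp(C_{\mathbf b})$ for the concatenation $\mathbf{ab}$ of the words $\mathbf a=(\omega_1,\dots,\omega_n)$ and $\mathbf b=(\omega_{n+1},\dots,\omega_{n+m})$, since $L_m(\alpha)\circ\theta^n(\omega)=\ln\pp_\alpha(C_{\omega_{n+1},\dots,\omega_{n+m}})$ and likewise for $L^{\rm ch}_m(\alpha)$ (inequalities involving $-\infty$, i.e. vanishing probabilities, being read in the extended reals). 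Two elementary operator facts drive the argument: first, for each letter $b$ and positive $T$ one has $\Phi_b(T)\le\Phi(T)$ because $\Phi-\Phi_b$ is positive, which iterates along a word to $\Phi_{\mathbf b}(T)\le T$ whenever $T\ge0$ is $\Phi$-invariant; and second, a positive operator $A$ with $\supp A$ in the range of a projector $P$ satisfies $A\le\|A\|_\infty P$, while $E\ge\mu P$ gives $P\le\mu^{-1}E$ for the smallest nonzero eigenvalue $\mu$ of $E$.

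For $L_n(\alpha)$, recall $\pp_\alpha=\pp_{\rho_i}$ for any fixed $i\in\alpha$, so $\pp_\alpha(C_{\mathbf{ab}})=\tr\big(\Phi^*_{\mathbf a}(\rho_i)\,\Phi_{\mathbf b}(\id)\big)$. Invariance $\Phi^*(\rho_i)=\rho_i$ with positivity of each $\Phi^*_a$ yields $\Phi^*_a(\rho_i)\le\rho_i$, hence by iteration $\Phi^*_{\mathbf a}(\rho_i)\le\rho_i$ with support in $\supp\rho_i$. Writing $P_i$ for the projector onto $\supp\rho_i$ and $\kappa_i>0$ for the smallest nonzero eigenvalue of $\rho_i$, the normalised operator $\Phi^*_{\mathbf a}(\rho_i)/\tr\Phi^*_{\mathbf a}(\rho_i)$ is a state on $\supp\rho_i$, hence bounded above by $P_i\le\kappa_i^{-1}\rho_i$. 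Since $\tr\Phi^*_{\mathbf a}(\rho_i)=\pp_\alpha(C_{\mathbf a})$ and $\Phi_{\mathbf b}(\id)\ge0$, this gives $\pp_\alpha(C_{\mathbf{ab}})\le\kappa_i^{-1}\pp_\alpha(C_{\mathbf a})\,\tr(\rho_i\Phi_{\mathbf b}(\id))=\kappa_i^{-1}\pp_\alpha(C_{\mathbf a})\pp_\alpha(C_{\mathbf b})$, the first inequality with $C=-\ln\kappa_i$.

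For $L^{\rm ch}_n(\alpha)$, the convex-combination structure of $\pp_{\rho^{\rm ch},\alpha}$ is sidestepped by using $\rho^{\rm ch}=\id_{\mathcal H}/\dim\mathcal H$ directly, so that $\pp_{\rho^{\rm ch},\alpha}(\mathbf a)=\tr(\Phi_{\mathbf a}(E_\alpha))/\tr E_\alpha$. Now the invariant object is the effect $E_\alpha=\sum_{i\in\alpha}E_i$ with $\Phi(E_\alpha)=E_\alpha$, so $Y:=\Phi_{\mathbf b}(E_\alpha)$ satisfies $0\le Y\le E_\alpha$, giving $\supp Y\subseteq\supp E_\alpha$ and $Y\le\|Y\|_\infty P_\alpha\le\mu_\alpha^{-1}\|Y\|_\infty E_\alpha$, with $P_\alpha$ the projector onto $\supp E_\alpha$ and $\mu_\alpha>0$ its smallest nonzero eigenvalue. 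Using $\|Y\|_\infty\le\tr Y$ and positivity of $\Phi_{\mathbf a}$ then yields $\tr(\Phi_{\mathbf a}(Y))\le\mu_\alpha^{-1}\tr(Y)\,\tr(\Phi_{\mathbf a}(E_\alpha))$, which rearranges to $\pp_{\rho^{\rm ch},\alpha}(C_{\mathbf{ab}})\le\mu_\alpha^{-1}(\tr E_\alpha)\,\pp_{\rho^{\rm ch},\alpha}(C_{\mathbf a})\pp_{\rho^{\rm ch},\alpha}(C_{\mathbf b})$, the second inequality with constant $\ln(\mu_\alpha^{-1}\tr E_\alpha)$. Taking $C$ to be the maximum of the two constants (both nonnegative, since $\kappa_i\le1$ and $\tr E_\alpha\ge\mu_\alpha$) — or any larger positive value — proves the lemma. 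The one genuine obstacle is that neither $\rho_i$ nor $E_\alpha$ need be full rank on $\mathcal H$; this is exactly what forces the restriction to the respective supports $\supp\rho_i$ and $\supp E_\alpha$, which is legitimate precisely because invariance confines $\Phi^*_{\mathbf a}(\rho_i)$ and $\Phi_{\mathbf b}(E_\alpha)$ to those supports, where the reference operators are strictly positive.
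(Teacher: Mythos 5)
Your proof is correct and follows essentially the same strategy as the paper's: both establish sub-multiplicativity of the cylinder probabilities by using invariance to confine the evolved operator to the support of the invariant object ($\rho_i$, resp.\ $E_\alpha$) and then dominating the corresponding projector by a multiple of that object. The only differences are presentational — you work directly with $\Phi$ and with $E_\alpha$ as a whole, whereas the paper phrases the estimates through the deformed instruments $\Phi^{(i)}$ and treats each $E_i$, $i\in\alpha$, separately before recombining — and these do not change the substance of the argument.
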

\begin{proof}
Let $\alpha\in\mathcal S$ and $\varrho_\alpha\in\mathcal D(\mathcal H)$ be $\Phi^*$-invariant such that $\supp\varrho_\alpha=\mathcal K_\alpha$. 
In the sequel we repeatedly use the channels $\Phi^{(\alpha)}$ and the fact that Hölder inequality for matrix Schatten norms implies the inequality $\tr(XY)\leq\tr(X)\Vert Y\Vert_\infty$ for any positive semi-definite matrix $X,Y$. 
We also use the operator $P_\alpha$ which stands for the orthogonal projector onto $\mathcal K_\alpha$. Recall that since $\varrho_\alpha$ is ${\Phi^{(\alpha)}}^*$-invariant, positivity implies that for any $\textbf{a}\in \Omega_{\rm fin.}$ 
$${\Phi_{\textbf{a}}^{(\alpha)}}^*(P_\alpha\rho P_\alpha)=P_\alpha{\Phi_{\textbf{a}}^{(\alpha)}}^*(P_\alpha\rho P_\alpha)P_\alpha,$$ 
for all $\rho\in\mathcal D$. Using $P_\alpha\varrho_\alpha P_\alpha=\varrho_\alpha$,
\begin{eqnarray*}
    \mathbb P_{\varrho_\alpha}(C_{\omega_1,\ldots,\omega_{n+m}})&=&\tr(\varrho_\alpha\Phi^{(\alpha)}_{\omega_1}\circ\ldots\Phi^{(\alpha)}_{\omega_n}\circ\Phi^{(\alpha)}_{\omega_{n+1}}\circ\ldots\circ\Phi^{(\alpha)}_{\omega_{n+m}}(\id_{\mathcal H_\alpha}))\\
    &=&\tr({\Phi^{(\alpha)}_{\omega_n}}^*\circ\ldots{\Phi^{(\alpha)}_{\omega_1}}^*(\varrho_\alpha)\Phi^{(\alpha)}_{\omega_{n+1}}\circ\ldots\circ\Phi^{(\alpha)}_{\omega_{n+m}}(\id_{\mathcal H_\alpha}))\\
    &=&\tr( P_\alpha{\Phi^{(\alpha)}_{\omega_n}}^*\circ\ldots{\Phi^{(\alpha)}_{\omega_1}}^*(\varrho_\alpha) P_\alpha\Phi^{(\alpha)}_{\omega_{n+1}}\circ\ldots\circ\Phi^{(\alpha)}_{\omega_{n+m}}(\id_{\mathcal H_\alpha}))\\
    &=&\tr({\Phi^{(\alpha)}_{\omega_n}}^*\circ\ldots{\Phi^{(\alpha)}_{\omega_1}}^*(\varrho_\alpha) P_\alpha\Phi^{(\alpha)}_{\omega_{n+1}}\circ\ldots\circ\Phi^{(\alpha)}_{\omega_{n+m}}(\id_{\mathcal H_\alpha}) P_\alpha)\\
&\leq&\tr(\varrho_\alpha\Phi^{(\alpha)}_{\omega_1}\circ\ldots\Phi^{(\alpha)}_{\omega_n}(\id_{\mathcal H_\alpha}))\Vert P_\alpha \Phi^{(\alpha)}_{\omega_{n+1}}\circ\ldots\circ\Phi^{(\alpha)}_{\omega_{n+m}}(\id_{\mathcal H_\alpha})P_\alpha\Vert_{\infty}
\end{eqnarray*}
Now, there exists $\lambda_\alpha>0$ such that $ P_\alpha\leq\lambda_\alpha\varrho_\alpha$. Indeed, on $P_\alpha\mathcal H$, $\varrho_\alpha$ is faithful and one can choose $1/\lambda_\alpha$ as the minimal eigenvalue of $\varrho_\alpha$ on $\supp \varrho_\alpha$
\begin{eqnarray*}\Vert P_\alpha \Phi^{(\alpha)}_{\omega_{n+1}}\circ\ldots\circ\Phi^{(\alpha)}_{\omega_{n+m}}(\id_{\mathcal H_\alpha})P_\alpha\Vert_{\infty}&\leq& \tr( P_\alpha \Phi^{(\alpha)}_{\omega_{n+1}}\circ\ldots\circ\Phi^{(\alpha)}_{\omega_{n+m}}(\id_{\mathcal H_\alpha}))\\&\leq& \lambda_\alpha\tr(\varrho_\alpha \Phi^{(\alpha)}_{\omega_{n+1}}\circ\ldots\circ\Phi^{(\alpha)}_{\omega_{n+m}}(\id_{\mathcal H_\alpha})).
\end{eqnarray*}
Then,
$$\mathbb P_{\varrho_\alpha}({\omega_1,\ldots,\omega_{n+m}})\leq\lambda_\alpha\mathbb P_{\varrho_\alpha}({\omega_1,\ldots,\omega_{n}})\mathbb P_{\varrho_\alpha}({\omega_{n+1},\ldots,\omega_{n+m}})$$
This way defining $C=\log \lambda_\alpha$, the sub-additivity of $L_n(\alpha)+C$ follows.

Concerning $L_n^{\rm ch}(\alpha)$, 
\begin{align*}
    \tr(E_\alpha)\pp_{\rho_{\rm ch},\alpha}({\omega_1,\ldots,\omega_{n+m}})=& \tr(E_\alpha\Phi^{(\alpha)}_{\omega_1}\circ\ldots\Phi^{(\alpha)}_{\omega_n}\circ\Phi^{(\alpha)}_{\omega_{n+1}}\circ\ldots\circ\Phi^{(\alpha)}_{\omega_{n+m}}(\id_{\mathcal H_\alpha}))\\
        \leq& \tr(E_\alpha\Phi^{(\alpha)}_{\omega_1}\circ\ldots\Phi^{(\alpha)}_{\omega_n}(\id_{\mathcal H_\alpha}))\Vert \Phi^{(\alpha)}_{\omega_{n+1}}\circ\ldots\circ\Phi^{(\alpha)}_{\omega_{n+m}}(\id_{\mathcal H_\alpha})\Vert_{\infty}.
\end{align*}
Since $E_\alpha$ is positive definite on $\mathcal H_\alpha$, there exists $\mu_\alpha>0$, such that $\mu_\alpha E_\alpha\geq \id_{\mathcal H_\alpha}$ and
\begin{align*}
    \Vert \Phi^{(\alpha)}_{\omega_{n+1}}\circ\ldots\circ\Phi^{(\alpha)}_{\omega_{n+m}}(\id_{\mathcal H_\alpha})\Vert_{\infty}\leq & \mu_\alpha \tr(E_\alpha \Phi^{(\alpha)}_{\omega_{n+1}}\circ\ldots\circ\Phi^{(\alpha)}_{\omega_{n+m}}(\id_{\mathcal H_\alpha})).
\end{align*}
It follows,
$$\pp_{\rho_{\rm ch},\alpha}({\omega_1,\ldots,\omega_{n+m}})\leq \mu_\alpha\tr(E_\alpha) \pp_{\rho_{\rm ch},\alpha}({\omega_1,\ldots,\omega_{n}})\pp_{\rho_{\rm ch},\alpha}({\omega_{n+1},\ldots,\omega_{n+m}}).$$
Then, by definition of $\pp_{\rho_{\rm ch},\alpha}$,
$$\pp_{\rho_{\rm ch},\alpha}({\omega_1,\ldots,\omega_{n+m}})\leq C \pp_{\rho_{\rm ch},\alpha}({\omega_1,\ldots,\omega_{n}})\pp_{\rho_{\rm ch},\alpha}({\omega_{n+1},\ldots,\omega_{n+m}})$$
with $C=\tr(E_\alpha)\mu_\alpha$. This inequality between probabilities implies the sub-additivity of $L_n^{\rm ch}(\alpha)+C$. Taking a constant $C$ large enough so that all the sub-additivities hold yields the lemma.
\end{proof}

We are now in position to prove  \Cref{thm:generalized QND}.
\begin{proof}[\Cref{thm:generalized QND} proof]
    The first part of the theorem is proved by \Cref{lem:conv Q}. We focus on the exponential convergence. Let $d=\dim\mathcal H$.
   Since $\rho\leq d\rho^{\rm ch}$, positivity of the linear extension of $\varrho\mapsto \pp_\varrho$ implies $Q_n(\alpha)\leq dQ_0(\alpha)\left.\frac{\d\pp_{\rho^{\rm ch},\alpha}}{\d\pp_{\rho}}\right|_{\mathcal F_n}$. It also implies $\pp_\rho\ll \pp_{\rho^{\rm ch}}$. Thus,
    $$Q_n(\alpha)\leq dQ_0(\alpha)\left.\frac{\d\pp_{\rho^{\rm ch},\alpha}}{\d\pp_{\rho^{\rm ch}}}\right|_{\mathcal F_n}\left.\frac{\d\pp_{\rho^{\rm ch}}}{\d\pp_{\rho}}\right|_{\mathcal F_n}, \quad \pp_\rho\as$$
    Then, again by absolute continuity of $\pp_\rho$ with respect to $\pp_{\rho^{\rm ch}}$,
    $$\lim_{n\to\infty}\left.\frac{\d\pp_{\rho}}{\d\pp_{\rho^{\rm ch}}}\right|_{\mathcal F_n}=\frac{\d\pp_{\rho}}{\d\pp_{\rho^{\rm ch}}}, \quad \pp_\rho\as$$
    and $\frac{\d\pp_{\rho}}{\d\pp_{\rho^{\rm ch}}}>0$ $\pp_\rho$-almost surely by \Cref{lem:strictly positive Randon Nikodym}. Hence,
    $$\limsup_{n\to\infty} \frac1n\ln Q_n(\alpha)\leq \limsup_{n\to\infty}\frac1n\ln \left.\frac{\d\pp_{\rho^{\rm ch},\alpha}}{\d\pp_{\rho^{\rm ch}}}\right|_{\mathcal F_n}, \quad \pp_{\rho}\as$$
    Our goal is to upper-bound the right hand side limit superior $\pp_{\rho^{\rm ch}}$-almost surely. Let,
    $$r_\alpha=-\limsup_{n\to\infty}\frac1n\ln\left.\frac{\d\pp_{\rho^{\rm ch},\alpha}}{\d\pp_{\rho^{\rm ch}}}\right|_{\mathcal F_n}.$$

    By \Cref{lem:strictly positive Randon Nikodym}, since $\pp_\gamma\ll \pp_{\rho^{\rm ch}}$ for any $\gamma\in\mathcal S$, $\pp_\gamma$-almost surely,
    \begin{eqnarray*}\limsup_{n\to\infty}\frac1n \ln\left.\frac{\d\pp_{\rho^{\rm ch},\alpha}}{\d\pp_{\rho^{\rm ch}}}\right|_{\mathcal F_n}&=&\limsup_{n\to\infty}\frac1n \ln\left.\frac{\d\pp_{\rho^{\rm ch},\alpha}}{\d\pp_{\gamma}}\right|_{\mathcal F_n}\left.\frac{\d\pp_{\gamma}}{\d\pp_{\rho^{\rm ch}}}\right|_{\mathcal F_n}\\&=&\limsup_{n\to\infty}\frac1n \ln\left.\frac{\d\pp_{\rho^{\rm ch},\alpha}}{\d\pp_{\gamma}}\right|_{\mathcal F_n}-\frac1n\ln\left.\frac{\d\pp_{\rho^{\rm ch}}}{\d\pp_{\gamma}}\right|_{\mathcal F_n}\\&=&\limsup_{n\to\infty}\frac1n \ln\left.\frac{\d\pp_{\rho^{\rm ch},\alpha}}{\d\pp_{\gamma}}\right|_{\mathcal F_n}.
    \end{eqnarray*}

    By \Cref{lem:L subadditive}, there exists $C>0$ such that $(L_n(\gamma)+C)_{n\in \nn}$ and $(L_n^{\rm ch}(\alpha)+C)_{n\in \nn}$ are both subadditive. The relative entropy  of $\pp_\gamma|_{\mathcal F_n}$ with respect to $\pp_{\rho^{\rm ch},\alpha}|_{\mathcal F_n}$ is
    $$S(\pp_\gamma|_{\mathcal F_n}|\pp_{\rho^{\rm ch},\alpha}|_{\mathcal F_n})=\begin{cases}
    -\ee_\gamma\left(\ln\left.\frac{\d\pp_{\rho^{\rm ch},\alpha}}{\d\pp_{\gamma}}\right|_{\mathcal F_n}\right) & \mbox{ if }\pp_\gamma|_{\mathcal F_n}\ll \pp_{\rho^{\rm ch},\alpha}|_{\mathcal F_n}\\
    \infty &\mbox{ else}
    \end{cases}$$
    and the entropy of $\pp_\gamma|_{\mathcal F_n}$ is
    $$S(\pp_\gamma|_{\mathcal F_n})=-\ee_\gamma(L_n(\gamma)).$$
    Since the entropy is subadditive and $(L_n^{\rm ch}(\alpha)+C)_{n\in \nn}$ is subadditive, using 
    $$S(\pp_\gamma|_{\mathcal F_n}|\pp_{\rho^{\rm ch},\alpha}|_{\mathcal F_n})=-S(\pp_\gamma|_{\mathcal F_n})-\ee_\gamma(L_n^{\rm ch}(\alpha)),$$
    it follows $(S(\pp_\gamma|_{\mathcal F_n}|\pp_{\rho^{\rm ch},\alpha}|_{\mathcal F_n})-C)_{n\in \nn}$ is super additive and Fekete's lemma implies
    $$s(\gamma|\alpha)=\lim_{n\to\infty} \frac1nS(\pp_\gamma|_{\mathcal F_n}|\pp_{\rho^{\rm ch},\alpha}|_{\mathcal F_n})=\sup_{n}\frac{S(\pp_\gamma|_{\mathcal F_n}|\pp_{\rho^{\rm ch},\alpha}|_{\mathcal F_n})-C}{n}.$$
    Hence, $s(\gamma|\alpha)=0$ implies, $\sup_n S(\pp_\gamma|_{\mathcal F_n}|\pp_{\rho^{\rm ch},\alpha}|_{\mathcal F_n})<\infty$. Since the relative entropy is lower semi-continuous, it implies $S(\pp_\gamma|\pp_{\rho^{\rm ch},\alpha})<\infty$. Thus, $\pp_{\rho^{\rm ch},\alpha}\gg \pp_\gamma$ and from \Cref{lem:Omega_alpha proba 1}, $\pp_{\gamma}(\Omega_\alpha)=1$ so that $\alpha=\gamma$. Thus, $s(\gamma|\alpha)= 0$ implies $\alpha=\gamma$. If $\alpha=\gamma$, $\pp_\gamma\leq \dim\mathcal H\ \pp_{\rho_{\rm ch},\gamma}$ implies $\sup_nS(\pp_\gamma|_{\mathcal F_n}|\pp_{\rho^{\rm ch},\gamma}|_{\mathcal F_n})\leq \ln \dim\mathcal H$. Hence, $s(\gamma|\gamma)=0$. It follows $s(\alpha|\gamma)\geq 0$ with equality if and only if $\alpha=\gamma$.
    
   Now, by Kingman's subadditive ergodic theorem,
    $$\lim_{n\to \infty} \frac1n\ln\left.\frac{\d\pp_{\rho^{\rm ch},\alpha}}{\d\pp_{\gamma}}\right|_{\mathcal F_n}=\lim_{n\to\infty}\frac1n(L_n^{\rm ch}(\alpha)-L_n(\gamma))=-s(\gamma|\alpha),\quad \pp_\gamma\as$$

    Replacing $\gamma$ with $\Gamma$ which are equal $\pp_\gamma$-almost surely,
    $$r_\alpha=s(\Gamma|\alpha),\quad\pp_\gamma\as$$
    Hence, $r_\alpha=s(\Gamma|\alpha)$ $\pp_{\rho_\infty}$-almost surely for any $\rho_\infty\in \mathcal D_{\Phi^*}$ thanks to \Cref{lem:conv combination of sectors}.

    It remains to prove $s(\Gamma|\alpha)$ upper bounds $r_\alpha$ $\pp_{\rho^{\rm ch}}$-almost surely. We write the Radon-Nikodym derivative explicitly:
    $$\left.\frac{\d\pp_{\rho^{\rm ch},\alpha}}{\d\pp_{\rho^{\rm ch}}}\right|_{\mathcal F_n}=\frac{\tr[\Phi_{\omega_1}\circ\dotsb\circ\Phi_{\omega_n}(E_\alpha)]}{\tr[\Phi_{\omega_1}\circ\dotsb\circ\Phi_{\omega_n}(\id)]}.$$
    Since $\|\Phi_{a}^*(\id)\|_\infty\leq d$ for any $a\in\mathcal A$, using Hölder's inequality $\tr(AB)\leq \|A\|_\infty\tr(B)$ for positive semi-definite $A$ and $B$ and $\|\Phi_{\omega_1}^*(\id_{\mathcal H})\|_\infty\leq d$, the numerator is such that,
    $$\tr[\Phi_{\omega_1}\circ\dotsb\circ\Phi_{\omega_n}(E_\alpha)]\leq d\ \tr[\Phi_{\omega_2}\circ\dotsb\circ\Phi_{\omega_n}(E_\alpha)].$$
    Using \Cref{lem:absolute continuity pp pp1pp2}, the denominator verifies
    $$\lim_{n\to\infty}\frac{\tr[\Phi_{\omega_1}\circ\dotsb\circ\Phi_{\omega_n}(\id)]}{\tr[\Phi_{\omega_2}\circ\dotsb\circ\Phi_{\omega_n}(\id)]}=\lim_{n\to\infty}\frac{\pp_{\rho^{\rm ch}}({\omega_1,\omega_2,\dotsc,\omega_n})}{\pp_{\rho^{\rm ch}}({\omega_2,\omega_3,\dotsc,\omega_n})}>0,\quad \pp_{\rho^{\rm ch}}\as$$
    Thus,
    $$r_\alpha\geq r_\alpha\circ\theta, \quad \pp_{\rho^{\rm ch}}\as$$
    Since for any $\rho\in \mathcal D(\mathcal H)$, $\pp_\rho\ll \pp_{\rho^{\rm ch}}$, this inequality also holds $\pp_\rho$-almost surely for any $\rho\in \mathcal D(\mathcal H)$.

    Then, since $r_\alpha\circ\theta\leq r_\alpha$ and $s(\Gamma|\alpha)\circ\theta=s(\Gamma|\alpha)$, for any $\rho\in \mathcal D(\mathcal H)$, $\pp_{\rho}(r_\alpha\geq s(\Gamma|\alpha))\geq \pp_{\rho}(r_\alpha\circ\theta\geq s(\Gamma|\alpha))=\pp_{\rho}\circ\theta^{-1}(r_\alpha\geq s(\Gamma|\alpha))=\pp_{\Phi^*(\rho)}(r_\alpha\geq s(\Gamma|\alpha))$. Repeating this procedure, \Cref{lem:pp Cesaro convergence} and affinity of $\varrho\mapsto \pp_\varrho$ imply,
    $$\pp_{\rho^{\rm ch}}(r_\alpha\geq s(\Gamma|\alpha))\geq\frac1n\sum_{k=1}^n\pp_{{\Phi^*}^k(\rho^{\rm ch})}(r_\alpha\geq s(\Gamma|\alpha))\xrightarrow[n\to\infty]{} \pp_{T_\infty(\rho^{\rm ch})}(r_\alpha\geq s(\Gamma|\alpha))=1.$$
    Thus $\pp_{\rho^{\rm ch}}(r_\alpha\geq s(\Gamma|\alpha))=1$. 
    In order to conclude for the filter, we use the result that we have just proved, namely
\[
\limsup_{n\to\infty} \frac{1}{n} \ln \hat{Q}_n(\alpha) \leq -s(\Gamma\vert\alpha),
\]
$\pp_{\hat{\rho}}$ almost surely. Since $\pp_\rho \ll \pp_{\hat{\rho}}$, we finally deduce that
\[
\limsup_{n\to\infty} \frac{1}{n} \ln \hat{Q}_n(\alpha) \leq -s(\Gamma\vert\alpha),
\]
$\pp_{\rho}$ almost surely.
\end{proof}

\begin{proof}[Proof of \Cref{cor:exp to subspace}]
    Since for any $\alpha\in \mathcal S$, $P_\alpha\leq E_\alpha$, the monotonicity of the logarithmic function implies the bounds on the rate of convergence.  We only need to prove $\tr(P_\Gamma\rho_n)\xrightarrow[n\to\infty]{\mbox{a.s.}}1$.

    \Cref{thm:generalized QND} implies $\tr(E_\Gamma\rho_n)\xrightarrow[n\to\infty]{\mbox{a.s.}}1$. Following the proof of \Cref{thm:sectors POVM}, $E_\Gamma=P_\Gamma+ P_{\mathcal T}E_\Gamma P_{\mathcal T}$ with $P_{\mathcal T}E_\Gamma P_{\mathcal T}\leq P_{\mathcal T}$ where $P_{\mathcal T}$ is the orthogonal projection onto the transient subspace $\mathcal T$. By definition, $\mathbb E_\rho(\tr(P_{\mathcal T}\rho_n))=\tr(P_{\mathcal T}{\Phi^*}^n(\rho))\xrightarrow[n\to\infty]{}0$. It thus remains to prove $(\tr(P_{\mathcal T}\rho_n))_n$ converges almost surely. We are inspired by \cite[Theorem~1.1]{BenoistTicco}. Since $\Phi^*(\mathcal B(\mathcal T^\perp))\subset \mathcal B(\mathcal T^\perp)$, the positivity of $\Phi$ implies $\Phi(\mathcal B(\mathcal T))\subset \mathcal B(\mathcal T)$. Since $\Phi(\id)=\id$, the positivity of $\Phi$ implies $\Phi(P_{\mathcal T})\leq \id$, but $\Phi(P_{\mathcal T})\in \mathcal B(\mathcal T)$, so $\Phi(P_{\mathcal T})\leq P_{\mathcal T}$. It follows $(\tr(P_{\mathcal T}\rho_n))_n$ is a positive super-martingale. Therefore, it converges almost surely and the corollary is proved.
\end{proof}
Note that the quantity \( s(\cdot \mid \cdot) \) defined with \( \pp_\rho^{ch} \) does not depend on the choice of the chaotic state. Indeed, this quantity remains the same for any other faithful state \( \rho \) since there exist a constant $C>0$ such that  
\(
C^{-1}\rho \leq \rho^{ch} \leq C\rho,
\) 
and this constant vanishes at the logarithmic scale when divided by \( n \).

Note that in the QND model (see the example of \Cref{sec:xpl qnd}), the relative entropy has an explicit form \cite{BenrardBauer, BenoistAHP}. This is a consequence of the fact that for QND models $\pp_\rho$ is always the law of a mixture of independent and identically distributed random variables. One can then apply the law of large numbers. Here the situation is quite different and the use of Kingman's theorem ensures the existence of the involved quantities but there is little hope to obtain explicit formulas.

\section{Exponential convergence in mean: \Cref{thm:meanconv} Proof}\label{sec:mean}

The first result of this section is a generalization of \cite[Theorem~3.3]{amini2023exponential}. In \cite{amini2023exponential} a crucial hypothesis was that $\mathcal T=\{0\}$ which eases the computations. Indeed, in that case $\Phi$ is block diagonal and each operator $E_\alpha$ is proportional to an orthogonal projector onto $\mathcal K_\alpha$. Here we use Lemma \ref{lem:law deformed instrument} which overcomes this hypothesis. Furthermore our context allows for considering imperfect measurements and equivalent invariant states in the sense of the sector definition. This was not addressed in \cite{amini2023exponential}.

\begin{proposition}\label{Prop:identif}
    There exists an integer N such that for all $\alpha,\beta\in \mathcal S$ such that $\alpha\neq\beta$ and for all $\varrho,\rho\in \mathcal D(\mathcal H)$ such that $\tr(E_\alpha\varrho)>0$ and $\tr(E_\beta\rho)>0$, and for all $n\geq N$,  there exists a word $\textbf{b}\in\mathcal A^n$  
$$\mathbb P_{\varrho,\alpha}({\textbf{b}}) \neq \mathbb P_{\rho,\beta}({\textbf{b}}).$$
\end{proposition}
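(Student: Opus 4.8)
The plan is to prove the stronger uniform statement that the two measures $\pp_{\varrho,\alpha}$ and $\pp_{\rho,\beta}$ already disagree on the $\sigma$-algebra $\mathcal F_N$, from which the claim for every $n\geq N$ comes for free: since $\mathcal F_m\subseteq\mathcal F_n$ whenever $m\leq n$, two probability measures that agree on $\mathcal F_n$ must agree on $\mathcal F_m$, so disagreement at level $N$ propagates to all levels $n\geq N$. Moreover, disagreement on $\mathcal F_n$ is exactly the existence of a word $\mathbf b\in\mathcal A^n$ with $\pp_{\varrho,\alpha}(C_{\mathbf b})\neq\pp_{\rho,\beta}(C_{\mathbf b})$, because the cylinders $C_{\mathbf b}$, $\mathbf b\in\mathcal A^n$, are the atoms of $\mathcal F_n$. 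Since $\mathcal P$ is finite and, by the very definition of sectors, $\alpha\neq\beta$ forces $\pp_\alpha\neq\pp_\beta$, it then suffices to produce one integer $N$ that works for all pairs and all admissible initial states at once.

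The key tool is a uniform-in-initial-state Cesàro convergence. Applying \Cref{lem:pp Cesaro convergence} to each deformed instrument $\mathcal J^{(i)}$ — whose channel $\Phi^{(i)}$ has the single invariant state $\rho_i$ by \Cref{prop:deformed instrument}, so that its Cesàro limit operator is constant equal to $\rho_i$ — and rewriting $\pp_{\varrho,i}$ through \Cref{lem:law deformed instrument}, I obtain that $\tfrac1M\sum_{k=1}^M\pp_{\varrho,i}\circ\theta^{-k}$ converges to $\pp_{\rho_i}=\pp_\alpha$ uniformly over all $\varrho$ with $\tr(E_i\varrho)>0$. Writing $\pp_{\varrho,\alpha}=\sum_{i\in\alpha}\tfrac{\tr(\varrho E_i)}{\tr(\varrho E_\alpha)}\pp_{\varrho,i}$ as a convex combination and using $\pp_{\rho_i}=\pp_\alpha$ for every $i\in\alpha$, the same uniform bound passes to the sectors:
$$\lim_{M\to\infty}\ \sup_{\varrho:\,\tr(E_\alpha\varrho)>0}\ \sup_{A\in\mathcal F}\left|\tfrac1M\sum_{k=1}^M\pp_{\varrho,\alpha}\circ\theta^{-k}(A)-\pp_\alpha(A)\right|=0.$$
The uniformity over $\varrho$ — in particular over states for which $\tr(E_\alpha\varrho)$ is arbitrarily small — is exactly what \Cref{lem:pp Cesaro convergence} supplies, and it is the crux of the whole argument.

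With this in hand I argue by contradiction. If no uniform $N$ existed, then for every $m$ there would be a pair $\alpha_m\neq\beta_m$ and admissible $\varrho_m,\rho_m$ with $\pp_{\varrho_m,\alpha_m}|_{\mathcal F_m}=\pp_{\rho_m,\beta_m}|_{\mathcal F_m}$; since $\mathcal P$ is finite, one ordered pair $(\alpha,\beta)$ recurs for infinitely many $m$. I then fix a word $\mathbf c\in\mathcal A^p$ with $\delta:=|\pp_\alpha(C_{\mathbf c})-\pp_\beta(C_{\mathbf c})|>0$, and choose $M$ large enough that the displayed supremum is below $\delta/3$ for both $\alpha$ and $\beta$. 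The functional $\varrho\mapsto\tfrac1M\sum_{k=1}^M\pp_{\varrho,\alpha}\circ\theta^{-k}(C_{\mathbf c})$ depends only on $\pp_{\varrho,\alpha}|_{\mathcal F_{M+p}}$, because $\theta^{-k}C_{\mathbf c}\in\mathcal F_{k+p}$ for $k\leq M$. Picking one recurring index $m\geq M+p$, the agreement on $\mathcal F_m$ restricts to agreement on $\mathcal F_{M+p}$, so the two Cesàro averages at $C_{\mathbf c}$ coincide; but one lies within $\delta/3$ of $\pp_\alpha(C_{\mathbf c})$ and the other within $\delta/3$ of $\pp_\beta(C_{\mathbf c})$, forcing $\delta\leq 2\delta/3$, a contradiction. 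Hence a uniform $N$ exists, and the monotonicity of the filtration upgrades it to all $n\geq N$. The main obstacle, as anticipated, is precisely the uniformity over initial states, which rules out any naive compactness argument on $\mathcal D(\mathcal H)$ and is overcome by the ergodic (Cesàro) estimate for the deformed instruments.
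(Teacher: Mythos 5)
Your proof is correct and follows essentially the same route as the paper: both rest on the observation that disagreement on a cylinder of length $N$ propagates to every length $n\geq N$, and on the uniform-in-initial-state Ces\`aro convergence of $\tfrac1M\sum_{k=1}^M\pp_{\varrho,\alpha}\circ\theta^{-k}$ to $\pp_\alpha$ obtained from \Cref{lem:pp Cesaro convergence} applied to the deformed instruments $\mathcal J^{(i)}$. The only difference is presentational — you run the final step as a proof by contradiction with a $\delta/3$ argument, whereas the paper argues directly via the triangle inequality to locate a shifted cylinder $C_{\mathbf{ab}}$ where the measures differ.
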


\begin{proof}
Some elements of this proof are similar to ones developed in \cite{amini2023exponential}.

First note that if $\textbf{b}\in\mathcal A^k$ is such that
$\mathbb P_{\varrho,\alpha}({\textbf{b}}) \neq \mathbb P_{\rho,\beta}({\textbf{b}})$ then for any $k'\geq k$ there exists $\textbf{b}'\in\mathcal A^{k'}$ such that $\mathbb P_{\varrho,\alpha}({\textbf{b}'}) \neq \mathbb P_{\rho,\beta}({\textbf{b}'}).$ Indeed for $k'\geq k$,  
$$\sum_{\textbf{a}\in\mathcal A^{k'-k}}\mathbb P_{\varrho,\alpha}({\textbf{b}\textbf{a}})=\mathbb P_{\varrho,\alpha}({\textbf{b}})\neq\mathbb P_{\rho,\beta}({\textbf{b}})=\sum_{\textbf{a}\in\mathcal A^{k'-k}}\mathbb P_{\rho,\beta}({\textbf{b}\textbf{a}})$$
and therefore we cannot have $P_{\varrho,\alpha}({\textbf{b}\textbf{a}})=P_{\rho,\beta}({\textbf{b}\textbf{a}})$ for all $\textbf{a}\in\mathcal A^{k'-k}.$ 

If follows that, since $\mathcal S$ is finite, there exists $n_0\in \mathbb N$ such that for any $\alpha,\beta\in \mathcal S$ such that $\alpha\neq \beta$, there exist $\textbf{b}\in\mathcal A^{n_0}$ such that
$$\mathbb P_{\alpha}({\textbf{b}})\neq\mathbb P_{\beta}({\textbf{b}}).$$
Then, let
$$e_{\alpha,\beta}=\vert \mathbb P_{\alpha}({\textbf{b}})-\mathbb P_{\beta}({\textbf{b}})\vert>0$$

Using that $\sigma\mapsto \tr(\sigma E_\gamma)\pp_{\sigma,\gamma}$ is affine and $E_\gamma$ is $\Phi$-invariant, repeating the proof of \Cref{lem:pp Cesaro convergence}, for any $\gamma\in \mathcal S$,
$$\lim_{n\to\infty}\sup_{\substack{\sigma\in \mathcal D(\mathcal H)\\ \tr(\sigma E_\gamma)>0}}\sup_{A\in \mathcal F}\left|\frac1n\sum_{k=1}^n \pp_{\sigma,\gamma}\circ\theta^{-k}(A)-\pp_{\gamma}(A)\right|=0.$$
Indeed, $\pp_{T_\infty(\sigma),\gamma}=\pp_\gamma$ since, by \Cref{lem:law deformed instrument}, $\pp_{T_\infty(\sigma),\gamma}=\sum_{i\in \gamma}\frac{\tr(\sigma E_\alpha)}{\tr(\sigma E_\gamma)}\pp_{\varrho_\alpha}=\pp_\gamma$.

Since $\mathcal S$ is finite, it follows there exists $M$ large enough such that for any $\alpha\neq \beta$ and $\varrho,\rho\in \mathcal D(\mathcal H)$ such that $\tr(\varrho E_\alpha)\tr(\rho E_\beta)>0$,
$$\left|\frac1M\sum_{k=1}^{M}\pp_{\varrho,\alpha}\circ\theta^{-k}({\mathbf b})-\pp_\alpha({\mathbf b})\right|+\left|\frac1M\sum_{k=1}^{M}\pp_{\rho,\beta}\circ\theta^{-k}({\mathbf b})-\pp_\beta({\mathbf b})\right|\leq \frac{e_{\alpha,\beta}}{2}.$$

Then, triangular inequality applied twice implies,
$$\frac1M\sum_{k=1}^{M}\left|\pp_{\varrho,\alpha}\circ\theta^{-k}({\mathbf b})-\pp_{\rho,\beta}\circ\theta^{-k}({\mathbf b})\right|\geq \frac{e_{\alpha,\beta}}{2}.$$
Hence, there exists $k\in\{1,\dotsc,M\}$ such that
$$\left|\pp_{\varrho,\alpha}\circ\theta^{-k}({\mathbf b})-\pp_{\rho,\beta}\circ\theta^{-k}({\mathbf b})\right|\geq \frac{e_{\alpha,\beta}}{2}.$$

Using that $\pp_{\sigma,\gamma}\circ\theta^{-k}({\mathbf b})=\sum_{\mathbf{a}\in \mathcal A^k}\pp_{\sigma,\gamma}({\mathbf{ab}})$ for any $\gamma\in \mathcal S$ and $\sigma\in \mathcal D(\mathcal H)$ such that $\tr(\sigma E_\gamma)>0$, there exist $\mathbf{a}\in \mathcal A^k$ such that,
$$|\pp_{\varrho,\alpha}({\mathbf{ab}}) - \pp_{\rho,\beta}({\mathbf{ab}})|>0.$$

That implies that for any $n\geq n_0+M$, there exists $\mathbf{b}\in \mathcal A^n$ such that
$$\pp_{\varrho,\alpha}({\mathbf{b}}) \neq \pp_{\rho,\beta}({\mathbf{b}}).$$

Since $M$ and $n_0$ are independent of $\alpha, \beta, \varrho$ and $\rho$, setting $N=M+n_0$ yields the proposition.
\end{proof}

Last proposition statement is the essential step of the proofs in \cite{amini2023exponential}. Indeed, all the exponential estimates in \cite{amini2023exponential} rely on the fact that identifiability with regards to the invariant states can be transferred to all the states of the related minimal invariant subspace. Here we improve this result by allowing for invariant states producing identical outcomes laws, non full support invariant states and non perfect measurements.

For the sake of completeness we recall now the results needed to obtain the exponential selection of sectors in mean. The following lemma defines a constant related to the rate of convergence. To formulate it we introduce
$$\mathcal D_{\alpha,\beta}=\{\rho\in\mathcal D(\mathcal H) : \tr(E_\alpha\rho)>0,\tr(E_\beta\rho)>0\}.$$

\begin{lemma}\label{lem:kappa}
    Let $N$ be the integer of Proposition \ref{Prop:identif}. Then,
    $$\kappa^N=\sup_{\alpha\neq\beta}\sup_{\rho\in\mathcal D_{\alpha,\beta}}\sum_{\textbf{a}\in\mathcal A^N}\sqrt{\mathbb P_{\rho,\alpha}({\textbf{a}})\mathbb P_{\rho,\beta}({\textbf{a}})}<1$$
\end{lemma}
\begin{proof}
    Let $\alpha,\beta\in \mathcal S$ be such that $\alpha\neq \beta$. Let $\rho\in \mathcal D_{\alpha,\beta}$. Then, Cauchy-Schwartz inequality implies
    $$\kappa_{\alpha,\beta}(\rho)=\sum_{\mathbf{a}\in \mathcal A^N}\sqrt{\mathbb P_{\rho,\alpha}({\textbf{a}})\mathbb P_{\rho,\beta}({\textbf{a}})}\leq \sqrt{\sum_{\mathbf{a}\in \mathcal A^N}\mathbb P_{\rho,\alpha}({\textbf{a}})}\sqrt{\sum_{\mathbf{a}\in \mathcal A^N}\mathbb P_{\rho,\beta}({\textbf{a}})}=1.$$
    Assume equality holds, then the vectors $(\sqrt{\mathbb P_{\rho,\beta}({\textbf{a}})})_{\mathbf{a}\in \mathcal A^N}$ and $(\sqrt{\mathbb P_{\rho,\beta}({\textbf{a}})})_{\mathbf{a}\in \mathcal A^N}$ are colinear. Since they both have non-negative entries and they both have $\ell^2$ norm $1$, they are equal. That contradicts \Cref{Prop:identif}. Hence, $\kappa_{\alpha,\beta}(\rho)<1$. Since $\mathcal S$ is finite and $\mathcal D(\mathcal H_\gamma)$ with $\mathcal H_\gamma=E_\gamma\mathcal H$ is compact for any $\gamma\in \mathcal S$, $\kappa^N=\sup_{\alpha\neq\beta}\sup_{\rho\in\mathcal D_{\alpha,\beta}}\kappa_{\alpha,\beta}(\rho)<1$ and the lemma is proved.
\end{proof}
We turn to the proof of exponential convergence using Lyapunov function
$$W(\rho)=\sum_{\alpha\neq\beta}\sqrt{\tr(E_\alpha\rho)\tr(E_\beta\rho)},$$
which in terms of the quantum trajectories, reads as
$$W(\rho_n)=\sum_{\alpha\neq\beta}\sqrt{Q_n(\alpha)Q_n(\beta)}.$$
\begin{proof}[\Cref{thm:meanconv} proof]
The proof is a consequence of Proposition \ref{Prop:identif} and Lemma \ref{lem:kappa}. It is similar to the one of \cite[Theorem~3.3]{amini2023exponential}. For any $k\in \nn$, direct computation leads to
\begin{align}\label{eq:decreasing W}
    \begin{split}
    \mathbb E[W(\rho_{k+1})\vert\rho_k]=&\ \sum_{\alpha \neq \beta}   \mathbf 1_{\rho_k\in\mathcal D_{\alpha,\beta}}\sqrt{\tr{E_\alpha \rho_k} \tr{E_\beta \rho_k} } \sum_{a\in\mathcal A}\sqrt{\mathbb P_{\rho_k,\alpha}(C_a)\mathbb P_{\rho_k,\beta}(C_a)}  \\
    \leq&\ W(\rho_k)
    \end{split}
\end{align}
where we used that Cauchy-Schwartz inequality implies $\sum_{a\in\mathcal A}\sqrt{\mathbb P_{\rho_k,\alpha}(C_a)\mathbb P_{\rho_k,\beta}(C_a)}\leq 1$. That implies $\mathbb E(W(\rho_n))$ is non increasing in $n$.

Moreover, \Cref{lem:kappa} implies
\begin{eqnarray*}
  \mathbb E[W(\rho_{k+N})\vert\rho_k]&=&\frac{1}{2} \sum_{\alpha \neq \beta}   \mathbf 1_{\rho_k\in\mathcal D_{\alpha,\beta}}\sqrt{\tr{E_\alpha \rho_k} \tr{E_\beta \rho_k} } \sum_{\mathbf{a}\in\mathcal A^N}\sqrt{\mathbb P_{\rho_k,\alpha}(C_\textbf{a})\mathbb P_{\rho_k,\beta}(C_\textbf{a}})  \\
  &\leq& \kappa^N W(\rho_k),
\end{eqnarray*}
Now for $n\in\mathbb N$, let $q=\lfloor n/N\rfloor$ and $l=n-qN$. Then,
$$\mathbb E[W(\rho_n)]=\mathbb E[W(\rho_{qN+l})]\leq\kappa^{qN}\mathbb E[W(\rho_l)]\leq \kappa^{-(N-1)}W(\rho)\,\kappa^n$$
where we used \Cref{eq:decreasing W} $l$ times to prove $\ee(W(\rho_l))\leq W(\rho)$. Then fixing $\tau=\kappa^{-(N-1)}$ yields the theorem for the true trajectory. Note that if $N=1$, $\tau=1$.

For the filter, for $\hat \rho\in \mathcal D(\mathcal H)$ positive definite, $\|\hat \rho^{-\frac12}\rho \hat \rho^{-\frac12}\|_\infty=\min\{c\geq 0: \rho\leq c\hat \rho\}$. Then, the positivity of $\varrho\mapsto \pp_\varrho$ implies $\pp_\rho\leq \|\hat \rho^{-\frac12}\rho \hat \rho^{-\frac12}\|_\infty\ \pp_{\hat \rho}$. 
Hence,
$$\mathbb E_{\rho}[W(\hat\rho_n)]=\mathbb E_{\hat\rho}\left[W(\hat\rho_n)\frac{\d\pp_\rho}{\d\pp_{\hat \rho}}\right]\leq \|\hat \rho^{-\frac12}\rho \hat \rho^{-\frac12}\|_\infty\ \mathbb E_{\hat\rho}[W(\hat\rho_n)]$$
and the bound for the true trajectory yields the theorem.
\end{proof}

\bigskip
\noindent \textbf{Acknowledgements.} The results presented in this article were obtained during an interneship of L. G. at Institut de Mathématiques de Toulouse. Our presentation of sectors evolved significantly after some useful criticism by anonymous referees and we are grateful for their guidance. During the completion of this work, the authors were supported by the ANR project ``ESQuisses'', grant number ANR-20-CE47-0014-01, the ANR project ``Quantum Trajectories'' grant number ANR-20-CE40-0024-01 and the program ``Investissements d'Avenir'' ANR-11-LABX-0040 of the French National Research Agency. C. P. was also supported by the ANR project Q-COAST ANR-19-CE48-0003.

\bigskip
\noindent\textbf{Conflicts of interests.} The authors declare no conflicts of interests with respect to the present article.

\end{document}